\newcommand{\eqdef}{\overset{\mbox{\rm\tiny def}}{=}}
\def\concurrent{\mathrel{co}}
\def\conflict{\mathrel{\#}}
\def\cutoffs{\mathit{coff}}
\newcommand{\move}[1]{\stackrel{#1}{\rightarrow}}
\def\petrinet{\mathcal{N}}
\newcommand{\postset}[1]{{{#1}^{\bullet}}}
\newcommand{\preset}[1]{{{}^{\bullet}{#1}}}
\def\unfolding{\mathcal{U}}
\def\prefix{\mathcal{P}}
\def\rdef#1{Definition~\ref{def:#1}}
\def\tup#1{\langle{#1}\rangle}
\def\ie{i.e.\ }
\def\config{\mathcal{C}}
\def\cut#1{\mathit{Cut}(#1)}
\def\marking#1{\mathit{Mark}(#1)}
\def\minconf#1{\lceil #1 \rceil}
\def\preconf#1{\lfloor #1 \rfloor}
\newcommand{\Adq}{\lhd}
\def\sgd{_{\mathrm{gd}}}
\def\mi#1{\mathit{#1}}
\def\model{{\mathcal N}}
\def\goal{g}
\def\Config{\mathcal{K}}
\def\useless{\texttt{useless-trs}}
\def\uselessmg#1#2{\useless\left(\textstyle\model,\goal,#1,#2\right)}
\def\fignored{\mathit{Useless}}
\def\ignored#1{\fignored(#1)}
\def\Vignored{\Delta}
\def\altconfig#1{\mathit{Alt}(#1)}
\begin{document}
\mainmatter
\title{Goal-Driven Unfolding of Petri Nets}

\author{Thomas Chatain\inst{1} \and Lo{\"i}c Paulev{\'e}\inst{2}}
\institute{LSV, ENS Cachan, INRIA, CNRS, Universit\'e Paris-Saclay, France
\and
LRI UMR 8623, Univ. Paris-Sud, CNRS, Universit\'e Paris-Saclay, France}

\maketitle

\begin{abstract}
Unfoldings provide an efficient way to avoid the state-space explosion due to
interleavings of concurrent transitions when exploring the runs of a Petri net.
The theory of adequate orders allows one to define finite prefixes of unfoldings
which contain all the reachable markings.
In this paper we are interested in reachability of a single
given marking, called the goal. We propose an algorithm for computing a finite
prefix of the unfolding of a 1-safe Petri net that preserves all minimal
configurations reaching this goal. Our algorithm combines the unfolding technique
with on-the-fly model reduction by static analysis aiming at avoiding the
exploration of branches which are not needed for reaching the goal.
We present some experimental results.
\end{abstract}

\section{Introduction}

Analysing the possible dynamics of a concurrent system expressed as Petri nets can be eased by means
of unfoldings and their prefixes which avoid exploring redundant interleaving of transitions.

In this paper, we propose a method which combines the unfolding technique with
model reduction in order to explore efficiently and completely
the minimal configurations (partially ordered occurrences of transitions) which lead
to a given goal marking/marked place.
In particular, we aim at ignoring configurations that cannot reach the goal, but
also configurations containing transient cycles.

The goal-driven unfolding relies on calling, on the fly, an external model reduction procedure which
identifies transitions not part of any minimal configuration for the goal reachability from the current marking.
Those useless transitions are then skipped by the unfolding.

We show how model reduction can be applied to the unfolding of a safe Petri net
\(\model\) in such a way that it preserves minimal configurations. Then we
present an algorithm to construct a corresponding goal-driven finite prefix.

\begin{figure}[tb]
  \scalebox{.8}{\def\b{1.0}
\def\c{1.0}

\tikzstyle{place}=[circle,draw=black,fill=white,minimum width=5mm,inner sep=0pt]
\tikzstyle{transition}=[rectangle,fill=none,draw=black,minimum width=4mm,minimum height=2mm,inner sep=0pt]

\begin{tikzpicture}[>=stealth,shorten >=1pt,node distance=\c cm,auto]
  \node[place] (p_0) at (0*\b, 0*\c) [label=right:$p_0$]{\(\bullet\)};
  \node[place] (p'_0) at (0*\b, -2*\c) [label=right:$p'_0$]{};
  \node[place] (p_3) at (-2*\b, 0*\c) [label=left:$p_3$]{};
  \node[place] (p'_3) at (-2*\b, -2*\c) [label=left:$p'_3$]{\(\bullet\)};
  \node[place] (p_4) at (0*\b, -4*\c) [label=right:$p_4$]{};
  \node[place] (p_1) at (-2*\b, -4*\c) [label=left:$p_1$]{};
  \node[place] (p'_1) at (-.5*\b, -6*\c) [label=left:$p'_1$]{};
  \node[place] (p_2) at (2*\b, -4*\c) [label=right:$p_2$]{};
  \node[place] (p'_2) at (.5*\b, -6*\c) [label=right:$p'_2$]{};
  \node[place] (p_5) at (0*\b, -8*\c) [label=right:$p_5$]{};

  \node[transition] (t_0) at (0*\b, -1*\c) [label=right:$t_0$]{};
  \path[->] (p_0) edge (t_0);
  \path[->] (t_0) edge (p'_0);
  \path[<->] (p_3) edge (t_0);

  \node[transition] (t_3) at (-2.5*\b, -1*\c) [label=left:$t_3$]{};
  \path[->] (p'_3) edge (t_3);
  \path[->] (t_3) edge (p_3);

  \node[transition] (t'_3) at (-1.5*\b, -1*\c) [label=right:$t'_3$]{};
  \path[->] (p_3) edge (t'_3);
  \path[->] (t'_3) edge (p'_3);

  \node[transition] (t_1) at (-1*\b, -3*\c) [label=left:$t_1$]{};
  \path[->] (p'_0) edge (t_1);
  \path[->] (t_1) edge (p_1);
  \path[->] (t_1) edge (p_4);
  \path[<->] (p'_3) edge (t_1);

  \node[transition] (t_2) at (1*\b, -3*\c) [label=right:$t_2$]{};
  \path[->] (p'_0) edge (t_2);
  \path[->] (t_2) edge (p_2);
  \path[->] (t_2) edge (p_4);

  \node[transition] (t'_1) at (-.5*\b, -5*\c) [label=left:$t'_1$]{};
  \path[->] (p_4) edge (t'_1);
  \path[->] (t'_1) edge (p'_1);

  \node[transition] (t'_2) at (.5*\b, -5*\c) [label=right:$t'_2$]{};
  \path[->] (p_4) edge (t'_2);
  \path[->] (t'_2) edge (p'_2);

  \node[transition] (t''_1) at (-1*\b, -7*\c) [label=left:$t''_1$]{};
  \path[->] (p_1) edge (t''_1);
  \path[->] (p'_1) edge (t''_1);
  \path[->] (t''_1) edge (p_5);

  \node[transition] (t''_2) at (1*\b, -7*\c) [label=right:$t''_2$]{};
  \path[->] (p_2) edge (t''_2);
  \path[->] (p'_2) edge (t''_2);
  \path[->] (t''_2) edge (p_5);
\end{tikzpicture}}\hfill
  \scalebox{.8}{\def\b{1.0}
\def\c{1.0}

\tikzstyle{place}=[circle,draw=black,fill=white,minimum width=5mm,inner sep=0pt]
\tikzstyle{transition}=[rectangle,fill=none,draw=black,minimum width=4mm,minimum height=2mm,inner sep=0pt]
\tikzstyle{cotransition}=[transition,fill=white,draw=black,dashed]
\tikzstyle{skipped}=[fill=gray]

\begin{tikzpicture}[>=stealth,shorten >=1pt,node distance=\c cm,auto]
  \node[place] (p_0) at (0*\b, 2*\c) [label=right:$p_0$]{\(\bullet\)};
  \node[place] (p'_0) at (0*\b, -2*\c) [label=right:$p'_0$]{};
  \node[place] (p_31) at (-2*\b, .5*\c) [label=left:$p_3$]{};
  \node[place] (p_32) at (-2*\b, -.5*\c) [label=left:$p_3$]{};
  \node[place] (p_33) at (-4*\b, -5*\c) [label=left:$p_3$]{};
  \node[place] (p_3u) at (-4*\b, -3*\c) [label=left:$p_3$]{};
  \node[place] (p'_31) at (-2*\b, 2*\c) [label=left:$p'_3$]{\(\bullet\)};
  \node[place] (p'_32) at (-2*\b, -2*\c) [label=right:$p'_3$]{};
  \node[place] (p'_3u) at (-3.5*\b, -1*\c) [label=left:$p'_3$]{};
  \node[place] (p'_3u2) at (-4*\b, -7*\c) [label=left:$p'_3$]{};
  \node[place] (p'_33) at (-2.5*\b, -3*\c) [label=left:$p'_3$]{};
  \node[place] (p_41) at (-.5*\b, -4*\c) [label=left:$p_4$]{};
  \node[place] (p_42) at (.5*\b, -4*\c) [label=right:$p_4$]{};
  \node[place] (p_1) at (-2*\b, -4*\c) [label=left:$p_1$]{};
  \node[place] (p'_1) at (-.5*\b, -6*\c) [label=left:$p'_1$]{};
  \node[place] (p'_1u) at (2.5*\b, -6*\c) [label=right:$p'_1$]{};
  \node[place] (p_2) at (2*\b, -4*\c) [label=right:$p_2$]{};
  \node[place] (p'_2) at (.5*\b, -6*\c) [label=right:$p'_2$]{};
  \node[place] (p'_2u) at (-2.5*\b, -6*\c) [label=left:$p'_2$]{};
  \node[place] (p_51) at (-1*\b, -8*\c) [label=left:$p_5$]{};
  \node[place] (p_52) at (1*\b, -8*\c) [label=right:$p_5$]{};

  \node[transition] (t_0) at (0*\b, 0*\c) [label=right:$t_0$]{};
  \path[->] (p_0) edge (t_0);
  \path[->] (t_0) edge (p'_0);
  \path[->] (p_31) edge (t_0);
  \path[->] (t_0) edge (p_32);

  \node[transition] (t_3) at (-2*\b, 1.25*\c) [label=left:$t_3$]{};
  \path[->] (p'_31) edge (t_3);
  \path[->] (t_3) edge (p_31);

  \node[transition,skipped] (t_32) at (-3.5*\b, -4*\c) [label=left:$t_3$]{};
  \path[->] (p'_33) edge (t_32);
  \path[->] (t_32) edge (p_33);

  \node[cotransition,skipped] (t_3u) at (-3.5*\b, -2.5*\c) [label=left:$t_3$]{};
  \path[->] (p'_32) edge (t_3u);
  \path[->] (t_3u) edge (p_3u);

  \node[transition] (t'_3) at (-2*\b, -1.25*\c) [label=right:$t'_3$]{};
  \path[->] (p_32) edge (t'_3);
  \path[->] (t'_3) edge (p'_32);

  \node[cotransition] (t'_32) at (-4*\b, -6*\c) [label=left:$t'_3$]{};
  \path[->] (p_33) edge (t'_32);
  \path[->] (t'_32) edge (p'_3u2);

  \node[cotransition] (t'_3u) at (-3.5*\b, 0*\c) [label=left:$t'_3$]{};
  \path[->] (p_31) edge (t'_3u);
  \path[->] (t'_3u) edge (p'_3u);

  \node[transition] (t_1) at (-1*\b, -3*\c) [label=right:$t_1$]{};
  \path[->] (p'_0) edge (t_1);
  \path[->] (t_1) edge (p_1);
  \path[->] (t_1) edge (p_41);
  \path[->] (p'_32) edge (t_1);
  \path[->] (t_1) edge (p'_33);

  \node[transition] (t_2) at (1*\b, -3*\c) [label=right:$t_2$]{};
  \path[->] (p'_0) edge (t_2);
  \path[->] (t_2) edge (p_2);
  \path[->] (t_2) edge (p_42);

  \node[transition] (t'_1) at (-.5*\b, -5*\c) [label=left:$t'_1$]{};
  \path[->] (p_41) edge (t'_1);
  \path[->] (t'_1) edge (p'_1);

  \node[transition] (t'_2) at (.5*\b, -5*\c) [label=right:$t'_2$]{};
  \path[->] (p_42) edge (t'_2);
  \path[->] (t'_2) edge (p'_2);

  \node[transition,skipped] (t'_1u) at (2.5*\b, -5*\c) [label=right:$t'_1$]{};
  \path[->] (p_42) edge (t'_1u);
  \path[->] (t'_1u) edge (p'_1u);

  \node[transition,skipped] (t'_2u) at (-2.5*\b, -5*\c) [label=left:$t'_2$]{};
  \path[->] (p_41) edge (t'_2u);
  \path[->] (t'_2u) edge (p'_2u);

  \node[transition] (t''_1) at (-1*\b, -7*\c) [label=left:$t''_1$]{};
  \path[->] (p_1) edge (t''_1);
  \path[->] (p'_1) edge (t''_1);
  \path[->] (t''_1) edge (p_51);

  \node[transition] (t''_2) at (1*\b, -7*\c) [label=right:$t''_2$]{};
  \path[->] (p_2) edge (t''_2);
  \path[->] (p'_2) edge (t''_2);
  \path[->] (t''_2) edge (p_52);
\end{tikzpicture}}
  \caption{A safe Petri net (left) and a finite complete prefix (right) of its
  unfolding.
  Dashed events are flagged as \emph{cut-offs}: the unfolding procedure does not
  continue beyond them.
  Events in gray can be declared as useless by the reduction procedure for $\{p'_3,p_5\}$
  reachability, and can be skipped during
  the goal-driven prefix computation.}
  \label{fig:net0}
  \label{fig:net0unf}
\end{figure}
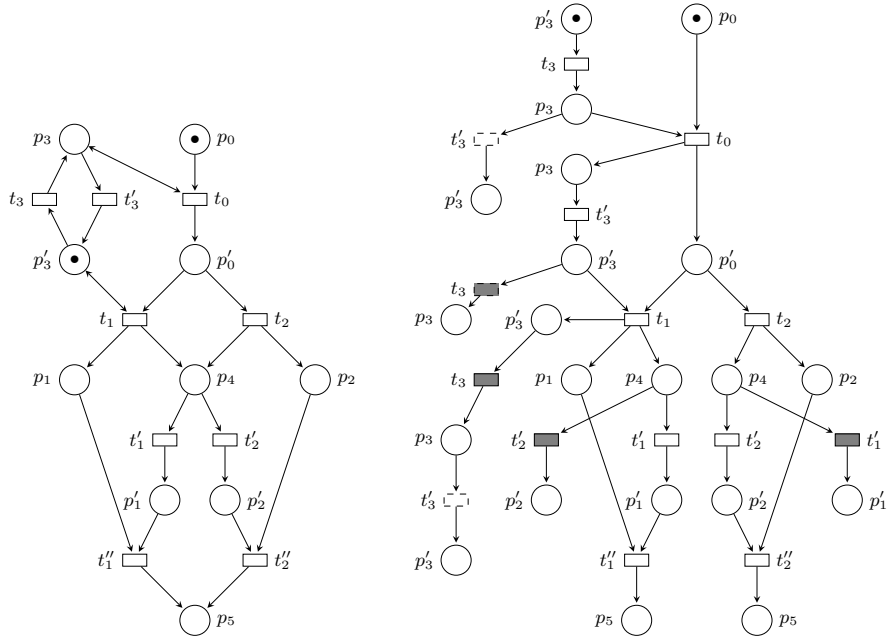

We illustrate this procedure on the Petri net of
Figure~\ref{fig:net0}. The goal is \(\{p'_3, p_5\}\). Notice that only one
occurrence of \(t_3\) is needed to reach the goal. So, after the corresponding
event, \(t_3\) can be declared useless. Also, after firing \(t_1\), \(t'_2\) is
fireable but firing it makes the goal unreachable. Therefore, a reduction
procedure may declare that \(t'_2\) is useless once \(t_1\) has occurred,
allowing one to avoid exploring this branch. Symmetrically, \(t'_1\) is useless
once \(t_2\) has occurred. It is easy to imagine a larger model where a large
piece of behaviour would be reachable from \(\{p_1, p'_2, p_3\}\) (but would not
allow to reach the goal); or from \(\{p_3,p_4\}\) (but would involve transient
cycles): the usual complete finite prefix would explore such
configurations, while our model reduction can avoid their computation.

The design of the model reduction procedure which identifies useless transitions is out of the scope of
the paper.
Instead, we consider it as a blackbox, and design our approach assuming
the reduction preserves all the minimal (acyclic) sequences of transitions leading to the goal.
Moreover, to be of practical interest, the reduction should show a
complexity lower than the reachability problem (PSPACE-complete \cite{ChengEP95}).

As detailed in Section~\ref{sec:gd-prefix}, skipping transitions declared useless by a
reduction procedure involves non-trivial modifications to the algorithm for computing the prefix of
the unfolding.
Indeed, a particular treatment of cut-offs has to be introduced in order to ensure
that the resulting goal-driven prefix includes all the minimal sequences of transitions.

The goal-driven unfolding has practical applications in systems biology
\cite{Samaga10-JCB}.
Indeed, numerous dynamical properties relevant for biological networks focus on the
reachability of the activity of a particular node in the network, typically a
transcription factor known to control a given cellular phenotype.
In this perspective, having computational methods that can be tailored for such
narrow reachability properties is of practical interest.
The \emph{completeness of the minimal sequences of transitions} for the goal reachability is
\emph{critical} for several analyses of biological system dynamics.
An example is the identification of parts of the network that play a central
role to activate a node of interest.
By altering such parts (e.g., with mutations) one can expect prevent such an
activation \cite{PAK13-CAV}.
If the analysis considers only a partial set of minimal sequences, there is no
guarantee that the predicted mutations are sufficient to prevent the goal
reachability.

\paragraph{Related Work.}
Numerous work address the computation of reachable states in concurrent systems
using unfoldings.
\cite{EsparzaS01} compares several algorithms for checking reachability based on
a previously computed finite complete prefix of a Petri net.
\cite{BaldanCK01} defines over-approximations of the
unfolding (i.e., which contains all the reachable markings, but potentially
more) for graph transformation systems.

Despite the negative result \cite{EsparzaKS08} which states that
depth-first-search strategies are not correct for classical unfolding
algorithms, \cite{BonetHHT08} defines \emph{directed unfolding} of Petri nets,
which is closely related to our goal-driven unfolding.
They rely on a heuristic function (on configurations) to generate an
ordering of the events for making a given transition appear as soon as possible
during the unfolding.
In addition, they can consider heuristic functions to detect
configuration from which the goal transition is not reachable.
In such a case, no extension will be made to that configuration, which may
significantly prune the computed prefix.
The major difference with the work presented in the paper is that directed
unfolding does not prune transitions
leading to
spurious transient cycles on the way to the goal. Actually, in their terms,
our reduction procedure would not be considered \emph{safely pruning} because we
discard (non-minimal) configurations reaching the goal.
In a sense, the reduction they achieve on the prefix size corresponds to the
extreme case when our external reduction procedure returns the full model
if the goal is reachable, and the empty model if not.
Indeed, except for the case when the goal is detected as non-reachable, all the
other configurations are kept in the directed unfolding, whereas our approach can
potentially output a prefix containing only, but all, minimal configurations for
the goal reachability.

Less related to our work, static analysis techniques were also used in
combination with partial order reductions.
\cite{FlanaganG05,WangYKG08} rely on an on-the-fly detection of independence
relations by static analysis, to improve partial order reductions.

\paragraph{Outline.}
Section~\ref{sec:unf} gives the basics of Petri net unfoldings and of their complete finite prefixes.
The concepts of minimal configuration and model reduction are introduced in
Section~\ref{sec:reduction}, and
Section~\ref{sec:godunf} details the goal-driven unfolding and prefix with proofs of completeness.
Finally, Section~\ref{sec:experiments} applies the goal-driven prefix to actual
biological models, and Section~\ref{sec:conclusion} concludes this paper.


\section{Unfoldings of Petri nets}
\label{sec:unf}

In this section, we explain the basics of Petri net unfoldings.
A more extensive treatment of the
theory explained here can be found, e.g., in \cite{Esparza08}.
Roughly speaking, the unfolding of
a Petri net $\petrinet$ is an ``acyclic'' Petri net $\unfolding$ that has the
same behaviours as $\petrinet$ (modulo homomorphism). In general, $\unfolding$
is an infinite net, but if $\petrinet$ is safe, then it is possible
\cite{McM92} to compute a finite prefix $\prefix$
of $\unfolding$ that is ``complete''
in the sense that every reachable marking of $\petrinet$ has a reachable
counterpart in $\prefix$. Thus, $\prefix$ represents the set of reachable
markings of $\petrinet$.
Figure~\ref{fig:net0} shows a Petri net and a finite complete prefix of its
unfolding.

We now give some technical definitions to introduce unfoldings formally.


\begin{definition}[(Safe) Petri Net]
  A \emph{(safe) Petri net} is a tuple $\petrinet=\tup{P, T, F, M_0}$ where $P$
  and $T$ are sets of \emph{nodes} (called \emph{places} and \emph{transitions}
  respectively), and $F\subseteq(P\times T)\cup(T\times P)$ is a \emph{flow
    relation} (whose elements are called \emph{arcs}). A subset $M\subseteq P$
  of the places is called a marking, and $M_0$ is a distinguished \emph{initial
    marking}.
\end{definition}
For any node $x\in P\cup T$, we call \emph{pre-set} of $x$ the set
$\preset{x}=\{y\in P\cup T\mid (y,x)\in F\}$ and \emph{post-set} of $x$ the set
$\postset{x}=\{y\in P\cup T\mid (x,y)\in F\}$.
These notations are extended to sets $Y \subseteq P\cup T$, with
$\preset Y = \cup_{x\in Y}\preset x$ and $\postset Y=\cup_{x\in Y}\postset x$.

A transition $t\in T$ is \emph{enabled} at a marking $M$ if and only if $\preset{t}\subseteq M$.
Then $t$ can \emph{fire}, leading to the new marking $M'=(M\setminus\preset{t})\cup\postset{t}$.
We write $M\move{t}M'$.
A \emph{firing sequence} is a (finite or infinite) word $w=t_1t_2\dots$ over $T$
such that there exist markings $M_1,M_2,\dots$ such that
$M_0\move{t_1}M_1\move{t_2}M_2\dots$\@
For any such firing sequence $w$, the markings $M_1,M_2,\dots$ are called \emph{reachable markings}.

The Petri nets we consider are said to be \emph{safe} because we will assume
that any reachable marking $M$ is such that for any $t\in T$ that can fire from
$M$ leading to $M'$, the following property holds: $\forall p\in M\cap M'$,
$p\in\preset{t}\cap\postset{t} \vee p\notin\preset{t}\cup\postset{t}$.

Figure~\ref{fig:net0} (left) shows an example of a safe Petri net.
The places are represented by circles and the transitions by rectangles (each one with a label identifying it).
The arrows represent the arcs.
The initial marking is represented by dots (or tokens) in the marked places.


\begin{definition}[Causality, conflict, concurrency]
\label{def:causality}
Let $\petrinet=\tup{P,T,F,M_0}$ be a net and $t,t'\in T$ two transitions
of $\petrinet$.
We say that
$t$ is a \emph{causal predecessor} of $t'$, noted $t<t'$, if there exists a
non-empty path of arcs from $t$ to $t'$. We note $t\leq t'$ if $t<t'$ or $t=t'$.
If $t\leq t'$ or $t'\leq t$, then $t$ and $t'$ are said to be
\emph{causally related}.
The set of causal predecessors of \(t\) is denoted \(\preconf{t}\). We write
\(\minconf{t}\) for \(\preconf{t} \cup \{t\}\), which we call the \emph{causal
  past} of \(t\).
Transitions $t$ and $t'$ are \emph{in conflict}, noted $t\conflict t'$, if there
exist
$u,v\in T$ such that \(u \neq v\), $u\le t$, $v\le t'$ and
$\preset{u}\cap\preset{v}\ne \emptyset$. We call
$t$ and $t'$ \emph{concurrent}, noted $t\concurrent t'$, if they are neither
causally related nor in conflict.
\end{definition}

As we said before, an unfolding is an ``acyclic'' net.
This notion of acyclicity is captured by \rdef{occurrencenet}.
As is convention in the unfolding literature, we shall refer to the
places of an occurrence net as \emph{conditions} and to its transitions
as \emph{events}. Due to the structural constraints, the firing
sequences of occurrence nets have special properties: if some condition $c$ is
marked during a run, then the token on $c$ was either present initially
or produced by one particular event (the single event in $\preset{c}$);
moreover, once the token on $c$ is consumed, it can never be replaced by
another token, due to the acyclicity constraint on $<$.

\begin{definition}[Occurrence net]
\label{def:occurrencenet}
An \emph{occurrence net}
$\mathcal{O}=\tup{P,T,G,M_0}$
is a Petri net
$\tup{P,T,F,M_0}$
with $P=C$, $T=E$, $F=G$, $M_0=C_0$
for which:
\begin{enumerate}
\item The causality relation $<$ is acyclic;
\item $|\preset{p}|\le 1$ for all places $p$,
  and $p \in M_0$ iff $|\preset{p}|=0$;
\item for every transition $t$, $t\conflict t$ does not hold, and \(\{x \mid x
  \leq t\}\) is finite.
\end{enumerate}
\end{definition}


\begin{definition}[Configuration, cut]
\label{def:configuration}
Let $\mathcal{O}=\tup{C,E,G,C_0}$ be an occurrence net.
A set $\config\subseteq E$ is called
\emph{configuration} (or \emph{process}) of $\mathcal{O}$ if (i) $\config$ is \emph{causally
closed}, \ie for all $e,e'\in E$ with $e'<e$, if $e\in\config$ then
$e'\in\config$; and (ii) $\config$ is conflict-free, \ie if $e,e'\in\config$,
then $\neg(e\conflict e')$.
The \emph{cut} of $\config$, denoted $\cut{\config}$, is the set of
conditions $(C_0\cup\postset{\config})\setminus\preset{\config}$.
\end{definition}

An occurrence net $\mathcal O$ with a net homomorphism $h$ mapping
its conditions and events to places and transitions of a net \(\model\)
is called a \emph{branching process} of \(\model\).
Intuitively, a configuration of \(\mathcal{O}\) is a set of events that can fire
during a firing sequence of $\petrinet$, and its cut is the set of conditions
marked after that sequence.

\paragraph{Unfolding.}
Let $\petrinet=\tup{P,T,F,M_0}$
be a safe Petri net. The unfolding $\unfolding=\tup{C,E,G,C_0}$ of
$\petrinet$ is the unique (up to isomorphism) maximal branching process
such that
the firing sequences and reachable markings of $\unfolding$ represent exactly
the firing sequences and reachable markings of $\petrinet$ (modulo $h$).
$\unfolding$ is generally infinite but its conditions and events can be
inductively constructed as follows:
\begin{enumerate}
\item The conditions $C$ are a subset of $(E\cup\{\bot\})\times P$.
 For a condition $c=\tup{x,p}$, we will have $x=\bot$ iff $c\in C_0$;
 otherwise $x$ is the singleton event in $\preset{c}$. Moreover, $h(c)=p$.
 The initial marking $C_0$ contains one condition $\tup{\bot,p}$
 per initially marked place $p$ of $\petrinet$.
\item The events $E$ are a subset of $2^C\times T$.
 More precisely, we have an event $e=\tup{C',t}$ for every set $C'\subseteq C$
 such that $c\concurrent c'$ holds for all $c,c'\in C'$ and
 $\{\,h(c)\mid c\in C'\,\}=\preset{t}$. In this case, we add edges $\tup{c,e}$
 for each $c\in C'$ (\ie $\preset{e}=C'$), we set $h(e)=t$, and for each
 $p\in\postset{t}$, we add to $C$ a condition $c=\tup{e,p}$, connected by
 an edge $\tup{e,c}$.
\end{enumerate}
Intuitively, a condition $\tup{x,p}$ represents the possibility of putting
a token onto place $p$ through a particular firing sequence, while an event
$\tup{C',t}$ represents a possibility of firing transition $t$ in a particular
context.

Every firing sequence \(\sigma\) is represented by a configuration of
$\unfolding$; we denote this configuration \(\Config(\sigma)\).
Conversely, every configuration $\config$ of $\unfolding$ represents one or
several firing sequences (\(\Config\) is not injective in general); these
firing sequences are equivalent up to permutation of concurrent
transitions. Their (common) resulting marking corresponds, due to the
construction of $\unfolding$, to a reachable marking of $\petrinet$. This
marking is defined as $\marking{\config}:=\{\,h(c)\mid c\in\cut{\config}\,\}$.

\paragraph{Finite Complete Prefix.}
The unfolding $\unfolding$ of a finite safe Petri net \(\model\) is infinite in
general, but it shows some regularity because \(\model\) has finitely many
markings and two events \(e\) and \(e'\) having \(\marking{\minconf{e}} =
\marking{\minconf{e'}}\) have isomorphic extensions.



It is known \cite{McM92,ERV02} that one can construct a \emph{finite complete
  prefix} \(\prefix\) of \(\unfolding\), i.e.\ a causally closed set \(E'\) of
events of \(\unfolding\) which is sufficiently large for satisfying the
following: for every reachable marking $M$ of $\petrinet$ there exists a
configuration $\config$ of $\prefix$ such that $\marking{\config}=M$. One can
even require that for each transition $t$ of \(\model\) enabled in $M$, there is
an event $\tup{C,t}\in E'$ enabled in $\cut{\config}$.

The idea of the construction is to explore the future of only one among the
events \(e\) having equal \(\marking{\minconf{e}}\). The selected event is the
one having minimal \(\minconf{e}\) w.r.t.\ a so-called \emph{adequate order} on
the finite configurations of \(\unfolding\). The others are flagged
as \emph{cut-offs}; they do not ``contribute any new reachable markings''. These
events are represented by dashed lines in Figure~\ref{fig:net0unf}.





\begin{definition}[Adequate orders]\label{adequate}
A strict partial order $\Adq$ on the finite configurations of the
unfolding of a safe Petri net $\model$ is called \emph{adequate} if:
\begin{itemize}
\item it refines (strict) set inclusion $\subsetneq$, \ie $C\subsetneq C'$
  implies $C\Adq C'$, and
\item it is preserved by finite extensions, i.e.\ for every pair of
  configurations $C$, $C'$ such that $\marking{C}=\marking{C'}$ and $C\Adq C'$,
  and for every finite extension $D$ of $C$, the finite extension $D'$ of $C'$
  which is isomorphic to $D$ satisfies $C \uplus D \Adq C'\uplus D'$.
\end{itemize}
\end{definition}
The initial definition of adequate orders \cite{ERV02} also requires that \(\Adq\)
is well founded, but \cite{ChatainK07} showed that, for unfoldings of safe Petri
nets, well-foundedness is a consequence of the other requirements.



Efficient tools \cite{mole,punf} exist for computing finite complete prefixes.


\section{Goal-Oriented Model Reduction}
\label{sec:reduction}

The goal-driven unfolding relies on model reduction procedures which preserve
minimal firing sequence to reach
a given goal \(\goal\). 
These reductions aim at removing as many
transitions as possible among those that do not participate in any minimal
firing sequence.
This section details the properties required by our method and introduce several
notations used in the rest of the paper.


\begin{definition}[Minimal firing sequence]
  A firing sequence \(t_1 \dots t_n\) of a Petri net \(\model = \tup{P, T, F,
    M_0}\) visiting markings $M_0\move{t_1}M_1\move{t_2}M_2\dots\move{t_n}M_n$
  is said \emph{cycling} if it visits twice the same marking, i.e.\ \(M_i =
  M_j\) for some \(0 \leq i < j \leq n\).
  A \emph{minimal firing sequence} of \(\model\) to a goal \(\goal\)
  is a firing sequence \(t_1 \dots t_n\) leading to \(\goal\)
  which has no feasible permutation\footnote{Contrary to what is common in concurrency
    theory, we do not necessarily restrict to permutations of independent
    transitions w.r.t.\ an independence relation.} being a cycling firing
  sequence of \(\model\).
\end{definition}

For example with Petri net of Fig.~\ref{fig:net0} and considering the goal
\(\{p'_3, p_5\}\),
\(t_3 t_0 t'_3 t_2 t_3
t'_2 t''_2 t'_3\) is not minimal because its permutation \(t_3 t_0 t'_3 t_3 t_2
t'_2 t''_2 t'_3\) is also feasible and visits the marking \(\{p_3, p'_0\}\) twice.
Intuitively, the cycle \(t'_3 t_3\) can be removed. The minimal firing sequences
of \(\model\) to the goal are \(t_3 t_0 t'_3 t_1 t'_1 t''_1\),
\(t_3 t_0 t_2 t'_3 t'_2 t''_2\) and their feasible permutations, for instance
\(t_3 t_0 t_2 t'_2 t''_2 t'_3\).

\begin{remark}
  Alternatively, the goal can be seen not as a marking but simply as a set of
  places to be marked together, possibly with others. Then, one is looking
  for sequences reaching any marking \(M\) with \(\goal \subseteq M\).
  For minimality, we would then require additionally that no intermediate
  marking reached before the end of the sequence marks the places in \(\goal\)
  (and the same for its permutations).
\end{remark}




\begin{definition}[Minimal configuration]
  A \emph{minimal configuration} of a Petri net \(\model\) to a goal \(\goal\) is a
  configuration \(E = \Config(\sigma)\) for some minimal firing sequence
  \(\sigma\) of \(\model\) to \(\goal\). Notice that, since all the other
  \(\sigma'\) such that \(E = \Config(\sigma')\) are permutations of \(\sigma\),
  they are all minimal.
\end{definition}

\begin{lemma}\label{lem:minimal_preserves_reach}
  The goal \(\goal\) is reachable iff it is reachable by a minimal firing
  sequence (and, consequently, by a minimal configuration).
\end{lemma}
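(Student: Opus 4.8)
The implication ($\Leftarrow$) is immediate: a minimal firing sequence to $\goal$ is in particular a firing sequence reaching $\goal$, and the associated minimal configuration $E = \Config(\sigma)$ satisfies $\marking{E} = \goal$ by construction. So the plan is to prove ($\Rightarrow$): from the mere reachability of $\goal$, build a \emph{minimal} firing sequence reaching it.

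The idea is a ``shorten until minimal'' induction. Start from any \emph{finite} firing sequence $\sigma$ with $M_0 \move{\sigma} \goal$; such a $\sigma$ exists because a reachable marking occurs at a finite position of some firing sequence. Argue by induction on $|\sigma|$. If $\sigma$ is minimal, we are done. Otherwise, by definition of minimality there is a feasible permutation $\sigma' = t'_1 \cdots t'_n$ of $\sigma$ that is cycling, with run $M_0 \move{t'_1} M'_1 \move{t'_2} \cdots \move{t'_n} M'_n$ and $M'_i = M'_j$ for some $0 \le i < j \le n$. Two moves finish the step. First, observe that $\sigma'$ still reaches $\goal$, i.e.\ $M'_n = \goal$: since $\sigma'$ is a permutation of $\sigma$, the two feasible sequences have the same Parikh image, hence reach the same marking (marking/incidence equation). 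Second, excise the loop by setting $\sigma'' := t'_1 \cdots t'_i\, t'_{j+1} \cdots t'_n$; after $t'_1 \cdots t'_i$ the net is at $M'_i = M'_j$, so the suffix $t'_{j+1} \cdots t'_n$ stays enabled and leads to $M'_n = \goal$, giving $M_0 \move{\sigma''} \goal$ with $|\sigma''| = |\sigma| - (j-i) < |\sigma|$. Apply the induction hypothesis to $\sigma''$. As $|\sigma| \in \mathbb{N}$, the descent terminates at a minimal firing sequence $\sigma^\star$ reaching $\goal$; then $\Config(\sigma^\star)$ is a minimal configuration with $\marking{\Config(\sigma^\star)} = \goal$, which also yields the parenthetical part of the statement.

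The only step that is not pure bookkeeping is the first observation: because the definition of minimality quantifies over \emph{all} feasible permutations (not just transpositions of concurrent transitions), I must invoke the linear-algebraic description of token flow to guarantee that such a permutation still ends in $\goal$, so that the repeated marking $M'_i = M'_j$ really is a removable detour \emph{on the way to the goal} and not on the way somewhere else; the feasibility of the spliced sequence $\sigma''$ is then free, since enabledness depends only on the current marking. A degenerate case worth a sentence: if $\goal = M_0$, the empty sequence reaches $\goal$ and is vacuously minimal, as it visits only $M_0$ and so is not cycling.
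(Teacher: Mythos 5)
Your proposal is correct and follows essentially the same route as the paper's proof: take a feasible cycling permutation, splice out the segment between the two equal markings to get a strictly shorter sequence still reaching \(\goal\), and iterate until minimality, with termination by decreasing length. The only addition is your explicit justification (via the marking equation, valid here thanks to the safety assumption) that a feasible permutation reaches the same final marking, a step the paper's proof leaves implicit.
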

\begin{proof}
  Assume that \(\goal\) is reachable by a non-minimal firing sequence
  \(\sigma\). This means that \(\sigma\) has a permutation \(t_1\dots t_n\)
  which visits the same marking twice, i.e.\
  $M_0\move{t_1}M_1\move{t_2}M_2\dots\move{t_i}M_i\dots\move{t_j}M_j\dots\move{t_n}M_n = \goal$
  with \(M_i = M_j\) and \(i < j\). Then \(\goal\) is also reachable by the
  strictly shorter sequence \(t_1 \dots t_i t_{j+1} \dots t_n\).
  This operation can be iterated if needed; it always terminates and gives a
  minimal firing sequence which reaches the goal \(\goal\).
\end{proof}


\begin{definition}[Reduction procedure, useless transitions]\label{def:reduction}
  A \emph{reduction procedure} \(\useless\) is a function which outputs, for
  a safe Petri net $\model$ and a goal $\goal\subseteq P$, a set
  \(\useless(\model, \goal) \subseteq T\) of transitions of \(\model\) which do
  not occur in any \emph{minimal} firing sequence of $\model$ to goal $\goal$:
  for every minimal firing sequence \(t_1\dots t_n\) to goal \(\goal\),
  \(\useless(\model, \goal) \cap \{t_1, \dots, t_n\} = \emptyset\).
\end{definition}

For example, let \(\model = \tup{P, T, F, M_0}\) be the Petri net of
Fig.~\ref{fig:net0}.
All the transitions occur in at least one minimal firing sequence to the goal
\(\goal = \{p'_3, p_5\}\),
so every reduction procedure outputs \(\useless(\model, \goal)=\emptyset\).
After firing \(t_3 t_0 t'_3\), one reaches marking \(\{p'_3,
p'_0\}\) from which the only minimal firing sequences to $\goal$ are \(t_1 t'_1
t''_1\) and \(t_2 t'_2 t''_2\). Hence, a reduction procedure called as
\(\useless\) \((\tup{P, T, F, \{p'_3, p'_0\}}, \goal)\) may declare \(t_0\), \(t_3\)
and \(t'_3\) useless, or any subset of those.

Given a Petri net $\model=\tup{P,T,F,M_0}$, 
$\model\setminus\useless(\model,\goal)$ denotes the reduced model
$\tup{P,T',F',M_0}$ where
  \(T' = T \setminus \useless(\model,\goal)\)
  and
  \(F' = F \cap ((P \times T') \cup (T' \times P))\).
  Property~\ref{pty:reduction} derives from Def.~\ref{def:reduction} and
  Lemma~\ref{lem:minimal_preserves_reach}.
\begin{property}\label{pty:reduction}
  Every reduction procedure preserves reachability of the goal: \(\goal\)
  is reachable in \(\model\) iff it is reachable in
  $\model\setminus\useless(\model,\goal)$.
\end{property}


In the sequel, we aim at iterating the reduction procedures: starting from a
model \(\model = \tup{P, T, F, M_0}\) and a goal \(\goal\), we will apply the
reduction to \(\model\), then explore the reduced net
$\model\setminus \useless(\model,\goal)$;
later on, we will apply again the reduction from a reached state \(M\) and
compute \(\useless(\model', \goal)\) with
\(\model' = \tup{P, T, F, M}\setminus\useless(\model,\goal)\) allowing to explore a further reduced
net
$\model'\setminus\useless(\model',\goal)$ from $M$.
These iterated calls to the reduction procedure are justified by the following
lemma.
\begin{lemma}
   Any minimal sequence in $\model\setminus\useless(\model,\goal)$
   is minimal in  $\model$.
\end{lemma}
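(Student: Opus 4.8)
The plan is to proceed by contraposition: I would show that if a firing sequence $\sigma$ of $\model' \eqdef \model\setminus\useless(\model,\goal)$ is \emph{not} minimal in $\model$, then it is not minimal in $\model'$ either. The crucial observation is that the transitions of $\model'$ are a subset of those of $\model$, so the set of firing sequences of $\model'$ is contained in the set of firing sequences of $\model$; moreover, since $\model'$ and $\model$ share the same places, the same flow relation restricted to $T'$, and the same initial marking, a word $w$ over $T'$ is a firing sequence of $\model'$ \emph{iff} it is a firing sequence of $\model$ (the enabling and firing conditions involve only pre-/post-sets, which are identical for transitions in $T'$), and it visits exactly the same markings in both nets.

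The key steps, in order, are as follows. First, I would fix a firing sequence $\sigma = t_1 \dots t_n$ of $\model'$ that reaches $\goal$ and is not minimal in $\model$; by Definition of minimal firing sequence, this means $\sigma$ has a feasible permutation $\tau = t_{\pi(1)} \dots t_{\pi(n)}$ which is a cycling firing sequence of $\model$, i.e.\ $\tau$ visits some marking twice. Second, I would note that $\tau$ uses only transitions from $\{t_1,\dots,t_n\} \subseteq T'$, hence by the remark above $\tau$ is also a firing sequence of $\model'$, visiting the same markings — so $\tau$ is a cycling firing sequence of $\model'$. Third, since $\tau$ is a permutation of $\sigma$ that is a cycling firing sequence of $\model'$, the sequence $\sigma$ is by definition not a minimal firing sequence of $\model'$ to $\goal$. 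This proves the contrapositive, and hence the lemma. (One should also check that $\sigma$, being a firing sequence of $\model'$ reaching $\goal$, is indeed a firing sequence of $\model$ reaching $\goal$, so that "minimal in $\model$" is the appropriate notion to negate — but this is again immediate from the shared-structure observation.)

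The main subtlety — not really an obstacle, but the point that needs care — is the direction of the inclusion: reduction only \emph{removes} transitions, so every behaviour of $\model'$ survives in $\model$, but $\model$ may have strictly more behaviours. Consequently a sequence could in principle be minimal in $\model'$ merely because the cycling permutation witnessing non-minimality in $\model$ uses a transition that was deleted. The argument above rules this out precisely because the cycling permutation $\tau$ is a \emph{permutation} of $\sigma$, so it uses no transition outside the (surviving) alphabet of $\sigma$; this is exactly where the definition of minimality via permutations (rather than, say, via arbitrary shorter sequences) is used, and it is worth stating explicitly. No appeal to the adequate order or to the unfolding machinery is needed; the proof is entirely at the level of firing sequences and markings.
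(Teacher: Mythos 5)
Your proof is correct and follows essentially the same route as the paper's (two-line) argument: firing sequences of the reduced net are firing sequences of $\model$ visiting the same markings, and the minimality criterion --- existence of a feasible cycling permutation --- depends only on the transitions occurring in the sequence itself, not on the ambient transition set. Your contrapositive phrasing and the explicit remark that the witnessing permutation stays within the surviving alphabet are just a more detailed unpacking of the paper's observation that ``the minimality criterion does not depend on the set of transitions in $\model$.''
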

\begin{proof}
Any firing sequence of $\model\setminus\useless(\model,\goal)$ is a firing
sequence of $\model$, and the minimality criterion does not depend
on the set of transitions in $\model$.
\end{proof}

In the remainder of the paper, for a Petri net $\model=\tup{P,T,F,M_0}$ and any set
$I\subseteq T$ and reachable marking $M$,
we write \(\uselessmg{M}{I}\) for \(\useless(\tup{P,T,F,M}\setminus I, \goal) \cup I\).



\section{Goal-Driven Unfolding}
\label{sec:godunf}

In this section, we first show that model reduction can be performed during the
unfolding of a safe Petri net $\model$ while preserving the minimal
configurations to the goal.
Next we present an algorithm to construct a finite goal-driven prefix which
preserves the reachable markings of the goal-driven unfolding.

\subsection{Guiding the Unfolding by a Model Reduction Procedure}
\label{sec:gd-unf}

The principle of the goal-driven unfolding is that, for some events \(e\) in the
unfolding (at discretion), a model reduction procedure \(\useless\) is called
and the transitions declared useless will not be considered in the future of
\(e\). More precisely, the reduction procedure is called on the marking
\(\marking{\minconf{e}}\) of the causal past of \(e\).

Notice that the reduction procedure may already have been used on some events in
the causal past of \(e\). Then,
\begin{itemize}
\item even if \(\useless\) is not called on \(e\), information about useless
  transitions inherited from the causal predecessors of \(e\) can be used (without calling the model
  reduction procedure), and this will already prune some branches in the future of \(e\);
\item if the reduction procedure is called on \(\marking{\minconf{e}}\),
  it can take as input the model already reduced by the transitions declared
  useless after some event in the causal past of \(e\).
\end{itemize}

Let $\unfolding = \tup{C,E,G,C_0}$ be the full unfolding of a safe Petri net
\(\model\).
Denote \(E'\) the set of events on which the reduction procedure is called.
The set \(E'\) and the reduction procedure define the set of transitions
\(\ignored e\) to be ignored in the future of an event \(e \in E\).
We define \(\fignored\) inductively as:
\[\ignored{e} \eqdef \left\{
\begin{array}{@{}l@{\quad}l@{}}
  \bigcup_{e' \in \preconf{e}} \ignored{e'} & \mbox{if \(e \notin E'\)} \\
  \uselessmg
      {\marking{\minconf{e}}}
      {\bigcup_{e' \in \preconf{e}} \ignored{e'}} & \mbox{if \(e \in E'\).}
\end{array}\right.
\]
Thus, every event \(e = \tup{C, t} \in E\) such that \(t \in
\ignored{e'}\) for some \(e' \in \preconf{e}\), is discarded from the goal-driven
unfolding. Denote \(E_\mi{Ignored}\) the set of such events.

It remains to define the goal-driven unfolding as the maximal prefix of the full
unfolding \(\unfolding\) having no event in \(E_\mi{Ignored}\). Since every
discarded event automatically discards all its causal successors, the set of
events remaining in the goal-driven unfolding \(\unfolding_{\mathrm{gd}}\) is
\[E_\mathrm{gd} \eqdef \{e \in E \mid \minconf{e} \cap E_\mi{Ignored} = \emptyset\}\;.\]

Notice that the events and conditions of the goal-driven unfolding as defined
above can be constructed inductively following the procedure described in
Section~\ref{sec:unf}, enriched so that it attaches the set \(\ignored{e}\) to
every new event \(e\).

\begin{theorem}\label{th:godunf}
(proof in Appendix~\ref{proof:godunf})
The goal-driven unfolding preserves all minimal configurations from \(M_0\) to
the goal.
\end{theorem}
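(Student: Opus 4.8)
The plan is to show that every minimal configuration $E$ of $\model$ to the goal $\goal$ survives in $\unfolding_{\mathrm{gd}}$, i.e.\ that $\minconf{e}\cap E_{\mi{Ignored}}=\emptyset$ for every $e\in E$. Equivalently, it suffices to prove that no event of $E$ is ignored, since $E$ is causally closed; and for that it suffices to prove that for every $e\in E$ and every $e'\in\preconf e$ (hence $e'\in E$ as well, by causal closure), $h(e)\notin\ignored{e'}$. So the proof reduces to a statement about a single minimal configuration and the transitions flagged along it.

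First I would fix a minimal firing sequence $\sigma=t_1\dots t_n$ with $E=\Config(\sigma)$ and set $M_0\move{t_1}M_1\cdots\move{t_n}M_n=\goal$. For each event $e_k$ of $E$ corresponding to the occurrence of $t_k$, the marking $\marking{\minconf{e_k}}$ is the marking reached after firing exactly the causal past of $e_k$, which is a sub-multiset of $\{t_1,\dots,t_k\}$ closed under causality — call this firing sequence $\sigma_k$ (a permutation of those transitions that is feasible and leads to $\marking{\minconf{e_k}}$). The key claim is: \emph{$\sigma_k$ extended by the remaining transitions of $\sigma$ gives a firing sequence to $\goal$ that is still minimal, and in particular $\sigma_k$ itself is a prefix of some minimal firing sequence to $\goal$.} Granting this, $\marking{\minconf{e_k}}$ is a marking from which $\goal$ is reachable by a minimal sequence using only transitions among $\{t_{k+1},\dots,t_n\}\cup(\text{tail of }\sigma)$, i.e.\ none of the transitions appearing in $E$ after $e_k$ is useless for the reduced model rooted at $\marking{\minconf{e_k}}$; this is exactly what Definition~\ref{def:reduction} guarantees for $\useless$, and it must be checked to be stable under the iterated/accumulated reductions $\uselessmg{\cdot}{\cdot}$ using the last lemma of Section~\ref{sec:reduction} (a minimal sequence in a reduced model is minimal in the original). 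Then an induction on the causal depth of $e$ shows $\ignored{e}$ contains no transition occurring later in $E$, so no event of $E$ is discarded.

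The main obstacle is the key claim that truncating a minimal firing sequence at a causally-closed prefix and re-attaching the remaining transitions yields again a minimal firing sequence — in particular that it does not create a cycling permutation. Here I would argue as follows: suppose $\sigma'=\sigma_k\,\tau$ (where $\tau$ lists $\{t_1,\dots,t_n\}\setminus\sigma_k$ in the order induced by $\sigma$) had a feasible permutation visiting some marking twice. Since the multiset of transitions of $\sigma'$ equals that of $\sigma$, any permutation of $\sigma'$ is also a permutation of $\sigma$ with the same multiset of transitions; one must check that ``feasible permutation'' in the sense of the paper depends only on this multiset and the net, not on the chosen interleaving, which is essentially the confluence/diamond property of safe Petri nets that the paper already relies on for $\Config$ being well-defined up to permutation. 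Hence a cycling feasible permutation of $\sigma'$ would also be a cycling feasible permutation of $\sigma$, contradicting minimality of $\sigma$. I would also need the dual direction of Lemma~\ref{lem:minimal_preserves_reach}/Property~\ref{pty:reduction} to ensure that after removing the accumulated ignored transitions the goal is still reachable \emph{minimally} from $\marking{\minconf{e_k}}$ by the relevant tail, so that $\useless$ is forbidden from flagging any transition of that tail.

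Finally I would assemble the induction cleanly: base case $e$ with $\preconf e=\emptyset$, where $\ignored e$ is either empty or $\uselessmg{M_0}{\emptyset}=\useless(\model,\goal)$, which by Definition~\ref{def:reduction} misses every transition of every minimal firing sequence, in particular every $t_k$; inductive step combining the inductive hypothesis (that for all causal predecessors $e'$ of $e$, $\ignored{e'}$ avoids all transitions occurring in $E$) with the claim above applied at $e$. Concluding, every event of every minimal configuration $E$ lies in $E_{\mathrm{gd}}$, and since $E$ is moreover conflict-free and causally closed it is a genuine configuration of $\unfolding_{\mathrm{gd}}$; hence $\unfolding_{\mathrm{gd}}$ preserves all minimal configurations from $M_0$ to $\goal$, which is the statement of Theorem~\ref{th:godunf}.
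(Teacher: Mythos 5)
Your proposal is correct and follows essentially the same route as the paper's proof in Appendix~\ref{proof:godunf}: reduce the statement to showing that no transition of a minimal configuration \(E\) ever enters \(\ignored{e'}\) for a causal predecessor \(e'\in E\); observe that the tail of a linearization of \(E\) that places \(\minconf{e'}\) first is itself a minimal firing sequence from \(\marking{\minconf{e'}}\), because a feasible cycling permutation of that tail would prepend to a feasible cycling permutation of the whole sequence, contradicting minimality of \(E\); and then conclude via Definition~\ref{def:reduction}, with an induction along causality taking care of the accumulated ignore sets (the paper instead picks a causally minimal offending event, which amounts to the same thing). The only cosmetic blemishes are your base case, where the marking should be \(\marking{\minconf{e}}\) rather than \(M_0\), and your appeal to a confluence/diamond property to justify that minimality depends only on the multiset of transitions --- this is unnecessary, since a permutation of the rearranged sequence is literally a permutation of the original one.
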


A direct corollary is that the goal is reachable in \(\model\) iff the
goal-driven unfolding contains a configuration which reaches it.

Notice that the precise definition of minimal sequences/configurations is
crucial here, and especially the fact that the reduction procedure preserves
\emph{all} minimal sequences/configurations. Indeed, imagine a situation where
the minimal firing sequences to the goal fire two concurrent transitions \(t_1\)
and \(t_2\) and then one out of two possible transitions \(t_3\) and \(t_4\). A
reduction procedure which would guarantee only the preservation of \emph{some}
minimal firing sequence to the goal could declare \(t_3\) useless when called
after the event corresponding to \(t_1\), and declare \(t_4\) useless when
called after \(t_2\), thus preventing to reach the goal.


\subsection{Goal-Driven Prefix}
\label{sec:gd-prefix}


We now define a finite goal-driven prefix. Our Algorithm~\ref{alg:gdprefix}
relies on the theory of adequate orders~\cite{ERV02} developed for unfoldings.
Any adequate order on the configurations of the full unfolding can be used, but,
since our goal-driven unfolding prunes some branches of the unfolding, we have
to adapt the construction.

A prefix $\prefix$ has the same structure as an unfolding, with an additional field $\cutoffs$ for the set of cut-off events.
As usual, the procedure \(\Call{Putative-GD-Prefix}{}\) extends iteratively the prefix \(\prefix=\tup{C,E,G,C_0,\cutoffs}\).
An extension is an event $e=\tup{C',t}$ with $C'\subseteq C$ s.t.
$\forall c,c'\in C',\ c \concurrent c'$,
$\{h(c)\mid c\in C'\} = \preset t$, and
$\forall \tup{e',p}\in C',\ e'\notin \cutoffs$.
Here the procedure maintains a map $\Vignored$ of
transitions that can be ignored, and considers an extension $e=\tup{C',t}$
only if the transition \(t\) is not declared useless, i.e., $t$ is absent from
$\Vignored(c')$ for all pre-condition $c'\in C'$.
\medskip

\begin{algorithm}[t]
\caption{\label{alg:gdprefix}Algorithm for goal-driven prefix computation.
}

\algrenewcommand{\algorithmiccomment}[1]{\hskip1em /\textit{#1}/}
\begin{algorithmic}[1]
\Procedure{Putative-GD-Prefix}{$\petrinet,\Vignored$}
with $\petrinet=\tup{P,T,F,M_0}$
\State $\prefix \gets \tup{
	C\gets \{\tup{\bot,p}\mid p\in M_0\},
	E\gets \emptyset,
	G\gets \emptyset,
	C_0\gets\{\tup{\bot,p}\mid p\in M_0\},
	\cutoffs\gets \emptyset}$
\Repeat
\State 
Let $e=\tup{C',t}$ be a 
$\Adq$-minimal extension of \(\prefix\) s.t.\
$t\notin \bigcup_{c'\in C'} \Vignored(c')$.
\label{line:extension}
\State $E\gets E\cup\{e\}$
\State $C\gets C\cup\{\tup{e,p}\mid p\in\postset t\}$
\State $G\gets G\cup\{\tup{c',e}\mid c'\in C'\}\cup\{\tup{e,\tup{e,p}}\mid p\in\postset t\}$
\If{$\exists e'\in E$ s.t. $\marking{\minconf e}=\marking{\minconf{e'}}$}
\State $\cutoffs \gets \cutoffs\cup\{e\}$ \label{line:cut-off}
\Comment{$e$ is a cut-off event}
\EndIf
\ForAll{$c\in \{\tup{e,p}\mid p\in\postset t\}$ s.t. $c\notin\Vignored$}
\Comment{extend $\Vignored$ with new cond.}
\label{line:extend-domain}
\State $\Vignored(c) \gets \ignored{c,\Vignored,\prefix}$
\EndFor
\Until{no extension exists}
\EndProcedure
\Statex
\Procedure{Post-$\Vignored$}{$\Vignored,\prefix$} with \(\prefix = \tup{C, E, G, C_0, \cutoffs}\)
\State $\Vignored' \gets\Vignored$
\Comment{copy map $\Vignored$}
\For{$e\in E$ following $\Adq$ order}
\ForAll{$c\in \postset e$}
\State 
$\Vignored'(c) \gets \Vignored'(c) \cap \ignored{c,\Vignored,\prefix}$
\label{line:post-useless}
\EndFor
\If{$\exists e' \in E\setminus\cutoffs$ s.t. $\marking{\minconf e}=\marking{\minconf{e'}}$}
\label{line:post-cutoffs}
	\ForAll{$c'\in\cut{\minconf{e'}}$ with $c\in\cut{\minconf{e}}$ and $h(c)=h(c')$}
	\State $\Vignored'(c')\gets \Vignored'(c')\cap\Vignored'(c)$
	\EndFor
\EndIf
\EndFor
\EndProcedure
\Statex
\Procedure{GD-Prefix}{$\petrinet$}
with $\petrinet=\tup{P,T,F,M_0}$
\State $\Vignored' \gets\{ \tup{\bot,p} \mapsto \emptyset\mid p\in M_0 \}$
\Repeat
\State $\Vignored\gets \Vignored'$
\Comment{copy map $\Vignored'$}
\State $\prefix \gets \Call{Putative-GD-Prefix}{\petrinet,\Vignored}$
\Comment{can add new entries in $\Vignored$}
\State $\Vignored'\gets $\Call{Post-$\Vignored$}{$\Vignored,\prefix$}
\label{line:recompute}
\Until{$\Vignored'=\Vignored$}
\EndProcedure
\end{algorithmic}
\end{algorithm}

The difficult part is that, when
an event \(e\) is declared cut-off because \(\marking{\minconf{e}} =
\marking{\minconf{e'}}\) for an event \(e' \Adq e\), nothing guarantees that the
transitions allowed after \(e\) are also allowed after \(e'\). Then, \(e\) and
\(e'\) have the same future in the full unfolding, but not necessarily in the
goal-driven unfolding.

\begin{figure}[tb]
  \scalebox{.8}{\def\b{1.2}
\def\c{1.5}

\tikzstyle{place}=[circle,draw=black,fill=white,minimum width=5mm,inner sep=0pt]
\tikzstyle{transition}=[rectangle,fill=none,draw=black,minimum width=4mm,minimum height=2mm,inner sep=0pt]

\begin{tikzpicture}[>=stealth,shorten >=1pt,node distance=\c cm,auto]
  \node[place] (p_0) at (0*\b, 0*\c) [label=right:$p_0$]{\(\bullet\)};
  \node[place] (p_1) at (-1*\b, -2*\c) [label=above:$p_1$]{};
  \node[place] (p_2) at (0*\b, -2*\c) [label=below:$p_2$]{};
  \node[place] (p_3) at (2*\b, -2*\c) [label=below:$p_3$]{};
  \node[place] (p_4) at (-1*\b, -4*\c) [label=right:$p_4$]{};

  \node[transition] (a) at (0*\b, -1*\c) [label=right:$a$]{};
  \path[->] (p_0) edge (a);
  \path[->] (a) edge (p_1);
  \path[->] (a) edge (p_2);

  \node[transition] (a') at (1*\b, -1*\c) [label=right:$a'$]{};
  \path[->] (p_0) edge (a');
  \path[->] (a') edge (p_1);
  \path[->] (a') edge (p_3);

  \node[transition] (b) at (1*\b, -1.7*\c) [label=above:$b$]{};
  \path[->] (p_2) edge (b);
  \path[->] (b) edge (p_3);

  \node[transition] (b') at (1*\b, -2.3*\c) [label=below:$b'$]{};
  \path[->] (p_3) edge (b');
  \path[->] (b') edge (p_2);

  \node[transition] (c) at (-1*\b, -3*\c) [label=right:$c$]{};
  \path[->] (p_1) edge (c);
  \path[->] (c) edge (p_4);
  \path[<->] (c) edge (p_2);

\end{tikzpicture}}\hfill
  \scalebox{.8}{\def\b{1}
\def\c{.8}

\tikzstyle{place}=[circle,draw=black,fill=none,minimum width=5mm,inner sep=0pt]
\tikzstyle{transition}=[rectangle,fill=none,draw=black,minimum width=4mm,minimum height=2mm,inner sep=0pt]
\tikzstyle{cotransition}=[transition,fill=none,draw=black,dashed]

\begin{tikzpicture}[>=stealth,shorten >=1pt,node distance=\c cm,auto]
  \fill[fill=lightgray, opacity=0.5] (-3*\b, -2*\c) -- (-1*\b, -4*\c) -- (-.5*\b,-10.5*\c) -- (-3.5*\b,-10.5*\c) -- (-3*\b, -2*\c);
  \fill[fill=lightgray, opacity=0.5] (1*\b, -2*\c) -- (3*\b, -2*\c) -- (3.5*\b,-8.5*\c) -- (.5*\b,-8.5*\c) -- (1*\b, -2*\c);
  \node[place] (p_0) at (0*\b, 0*\c) [label=right:$p_0$]{\(\bullet\)};
  \node[place] (p_1) at (-3*\b, -2*\c) [label=left:$p_1$]{};
  \node[place] (p_1') at (1*\b, -2*\c) [label=left:$p_1$]{};
  \node[place] (p_2) at (-1*\b, -2*\c) [label=left:$p_2$]{};
  \node[place] (p_2') at (3*\b, -4*\c) [label=left:$p_2$]{};
  \node[place] (p_2'2) at (3*\b, -6*\c) [label=left:$p_2$]{};
  \node[place] (p_22) at (-1*\b, -6*\c) [label=left:$p_2$]{};
  \node[place] (p_23) at (-1*\b, -8*\c) [label=left:$p_2$]{};
  \node[place] (p_3) at (-1*\b, -4*\c) [label=right:$p_3$]{};
  \node[place] (p_32) at (-1*\b, -10*\c) [label=right:$p_3$]{};
  \node[place] (p_3') at (3*\b, -2*\c) [label=right:$p_3$]{};
  \node[place] (p_3'2) at (3*\b, -8*\c) [label=right:$p_3$]{};
  \node[place] (p_4) at (-3*\b, -8*\c) [label=right:$p_4$]{};
  \node[place] (p_4') at (1*\b, -6*\c) [label=right:$p_4$]{};

  \node[transition] (a) at (-2*\b, -1*\c) [label=right:$a$]{};
  \path[->] (p_0) edge (a);
  \path[->] (a) edge (p_1);
  \path[->] (a) edge (p_2);

  \node[cotransition] (a') at (2*\b, -1*\c) [label=right:$a'$, label=above:$(e)$]{};
  \path[->] (p_0) edge (a');
  \path[->] (a') edge (p_1');
  \path[->] (a') edge (p_3');

  \node[transition] (b) at (-1*\b, -3*\c) [label=left:$b$, label=right:$(e')$]{};
  \path[->] (p_2) edge (b);
  \path[->] (b) edge (p_3);

  \node[transition] (b1) at (3*\b, -7*\c) [label=left:$b$]{};
  \path[->] (p_2'2) edge (b1);
  \path[->] (b1) edge (p_3'2);

  \node[transition] (b2) at (-1*\b, -9*\c) [label=left:$b$]{};
  \path[->] (p_23) edge (b2);
  \path[->] (b2) edge (p_32);

  \node[transition] (b') at (3*\b, -3*\c) [label=left:$b'$]{};
  \path[->] (p_3') edge (b');
  \path[->] (b') edge (p_2');

  \node[transition,pattern=north east lines] (b'2) at (-1*\b, -5*\c) [label=left:$b'$]{};
  \path[->] (p_3) edge (b'2);
  \path[->] (b'2) edge (p_22);

  \node[transition] (c) at (-2*\b, -7*\c) [label=right:$c$]{};
  \path[->] (p_1) edge (c);
  \path[->] (p_22) edge (c);
  \path[->] (c) edge (p_4);
  \path[->] (c) edge (p_23);

  \node[transition] (c') at (2*\b, -5*\c) [label=right:$c$]{};
  \path[->] (p_1') edge (c');
  \path[->] (p_2') edge (c');
  \path[->] (c') edge (p_4');
  \path[->] (c') edge (p_2'2);

  \path[->, bend right, dashed] (a') edge (b) {};

  \node[transition] (c'') at (-4*\b, -3.5*\c) [label=right:$c$]{};
  \node[place] (p_4'') at (-3.5*\b, -4.5*\c) [label=right:$p_4$]{};
  \node[place] (p_2'') at (-4.5*\b, -4.5*\c) [label=left:$p_2$]{};
  \node[transition] (b'') at (-4.5*\b, -5.5*\c) [label=right:$b$]{};
  \node[place] (p_3'') at (-4.5*\b, -6.5*\c) [label=left:$p_3$]{};
  \path[->] (p_1) edge (c'') (c'') edge (p_4'') edge (p_2'')
  		(p_2) edge (c'')
		(p_2'') edge (b'') (b'') edge (p_3'')
	;

\end{tikzpicture}}
  \caption{A safe Petri net (left) and one of its branching processes (right).
    Configurations \(\minconf{e}\) and \(\minconf{e'}\) lead to the
    same marking \(\{p_1, p_3\}\) and have isomorphic extensions (in gray). The
    dashed arrow represents the fact that \(\minconf{e'} \Adq \minconf{e}\).
    Consequently \(e\) is a cut-off.}
  \label{fig:pb_cut-off}
\end{figure}
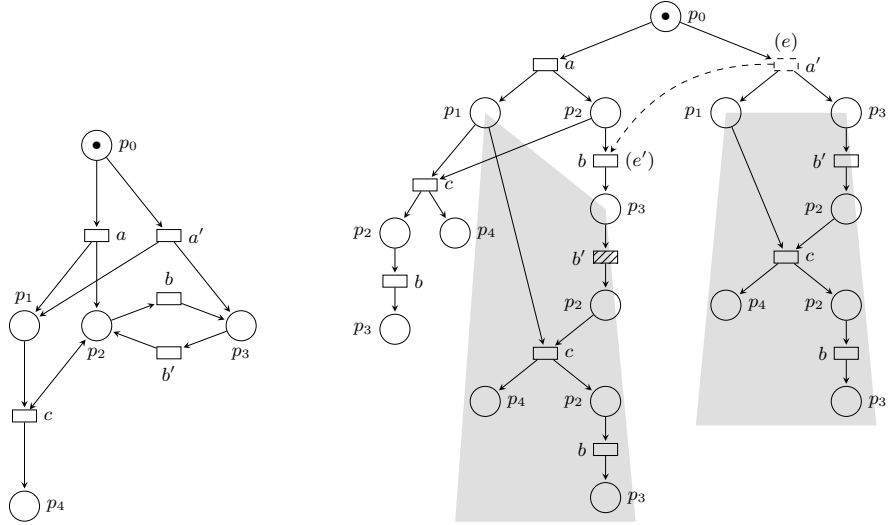

Fig.~\ref{fig:pb_cut-off} illustrates this situation. Let the goal be \(\goal
= \{p_4, p_3\}\). It can be reached by the firing sequences \(a(bb')^*c(bb')^*b\)
or \(a'b'(bb')^*c(bb')^*b\). Only those who do not take the cycle \(bb'\) are
minimal, namely \(acb\) and \(a'b'cb\). Notice that all the transitions
participate in at least one minimal firing sequence, so the model \(\model\)
cannot be reduced from the initial marking (every reduction procedure will
output \(\useless(\model, \goal) = \emptyset\)). On the other hand, if
transition \(a\) is fired, we reach marking \(\{p_1, p_2\}\) from which \(b'\),
\(a\) and \(a'\) become useless.

Now, observe the branching process on the right of Fig.~\ref{fig:pb_cut-off}
(it is a prefix of the unfolding \(\unfolding\) of \(\model\)). Notice that the
causal past \(\minconf{e}\) of the event labeled \(a'\) and the causal past of
the event \(e'\) labeled \(b\) reach the same marking \(\marking{\minconf{e}} =
\marking{\minconf{e'}} = \{p_1, p_3\}\). Moreover, an adequate order on the
configurations of \(\unfolding\) may order them as \(\minconf{e'} \Adq
\minconf{e}\). Consequently, \(e\) is a cut-off and the minimal configuration
\(\Config(a'b'cb)\) is not represented in the finite prefix. Following the idea of
the proof of completeness of finite prefixes based on adequate orders,
we can indeed shift the extension \(b'cb\) of \(\minconf{e}\) (in gray on the
right of Fig.~\ref{fig:pb_cut-off}) to the isomorphic extension of
\(\minconf{e'}\) (also in gray on the figure). We get the configuration
\(\Config(abb'cb)\), which reaches the goal as well. But this configuration is
\emph{not} minimal any more because it executes the cycle \(bb'\): the marking
reached after \(a\) is the same as the marking reached after \(abb'\). Actually,
the model reduction procedure called from the event labeled \(a\) may very well
have declared \(b'\) useless.
Consequently, \(\Config(abb'cb)\) would not be represented in the prefix.
We correct this by allowing after \(\minconf{e'}\) all the transitions that were
allowed after \(\minconf{e}\).


\medskip


The difficulty in the definition and in the computation of a finite prefix
\(\prefix\sgd\) of \(\unfolding\) which preserves the markings reachable
in $\unfolding\sgd$
is to allow in the future of an event \(e'\) all the transitions that are useful
for at least one of all the configurations which are shifted to \(\minconf{e'}\)
by the mechanics described above.
The first answer to this problem is to allow after \(\minconf{e'}\) all the
transitions that were allowed after \(\minconf{e}\). This solves the problem of
an event consuming only post-conditions of \(e'\), like the occurrence of \(b'\)
after \(e\) in our example of Fig.~\ref{fig:pb_cut-off}: its corresponding event
after \(e'\) is now allowed. However, this is not sufficient in general:
an event \(f\) consuming a post-condition of \(e\) may also consume
other conditions which are created by events concurrent to \(\minconf{e}\). Such
event \(f\) has a corresponding \(f'\) in the future of \(e'\), consuming
conditions which are available after firing a configuration of the form
\(\minconf{e'} \cup \config'\) for some \(\config'\) concurrent to
\(\minconf{e'}\). 
We need transition \(t = h(e) = h(e')\) to be allowed after
all the conditions consumed by \(f'\). In the case of a condition \(c' \in
\preset{f'} \setminus \cut{\minconf{e'}}\), our procedure ensures this as
follows: if it calls the model reduction procedure after the event
\(\preset{c'}\),
it also calls it on the marking
\(\marking{\minconf{e'} \cup \minconf{\preset{c'}}}\) which equals
\(\marking{\minconf{e} \cup \minconf{\preset{c}}}\). Hence, if \(t\) is needed
after \(\minconf{e} \cup \minconf{\preset{c}}\), it will also be allowed after
\(c'\).
In the end, when applying the reduction procedure
after a configuration \(\config\), we also take into account a set
\(\altconfig{\config}\) of alternating configurations defined
inductively as:
\begin{itemize}
\item \(\config \in \altconfig{\config}\)
\item \(\forall \config' \in \altconfig{\config}\), 
  \(\forall e, e' \in E\) such that \(\minconf{e} \rhd \minconf{e'}\)
  and \(\marking{\minconf{e}} = \marking{\minconf{e'}}\),\\
  \indent if \(\cut{\minconf{e'}} \cap \postset{\config'} \neq \emptyset\)
  and \(\minconf{e'} \cup \config'\) is conflict free, then 
  \(\minconf{e'} \cup \config' \in \altconfig{\config}\).
\end{itemize}

However, in practice, during the computation of the goal-driven prefix,
\(\altconfig{\config}\) will be computed on the events and configurations
derived so far, hence ignoring events later added in the prefix. Also, as
explained above, when an event \(e\) is stated cut-off because of a
$\Adq$-smaller event $e'$, we allow after \(e'\) all the transitions allowed
after \(e\); but this implies reconsidering some new extensions of \(e'\).

For these reasons, the procedure \(\Call{GD-Prefix}{\petrinet}\) presented in
Algorithm~\ref{alg:gdprefix} iterates the computation of a putative
prefix,
progressively refining an over-approximation of
transitions to ignore (map $\Vignored$), by identifying \textit{a posteriori} the
transitions that should \emph{not} have been ignored.
\medskip

At each iteration, the procedure
$\Call{Putative-GD-Prefix}{\petrinet,\Vignored}$ computes a
putative prefix, relying on the previous value of the map $\Vignored$ of
transitions that can be ignored. Essentially, the prefix \(\prefix\) obtained at
the first iteration 
is the naive prefix
of \(\unfolding_\mathrm{gd}\) (prefix without the gray parts on the example of
Fig.~\ref{fig:pb_cut-off}).

Once a putative prefix has been computed, we verify \emph{a posteriori} if its
related map $\Vignored$ is correct.
This is done by re-computing $\Vignored$ using the procedure
\Call{Post-$\Vignored$}{$\Vignored,\prefix$}, this time taking into account all
the events in $\prefix$ (line~\ref{line:recompute}).
By construction, the resulting $\Vignored'$ can only allow more transitions than
\(\Vignored\). If $\Vignored'$ differs from $\Vignored$,
a new putative prefix is computed according to the corrected $\Vignored'$.

The procedure \(\Call{Post-$\Vignored$}{\Vignored,\prefix}\) takes the
\(\altconfig{\minconf{e}}\) into account by the way of a modified version of
\(\ignored\), now defined on conditions rather than on events.
Given a condition $c\in\postset e$ in a prefix $\prefix$,
\[\begin{array}{@{}l@{}}
\ignored{c,\Vignored,\prefix} \eqdef\\\qquad \left\{
\begin{array}{@{}l@{\quad}l@{}}
  \bigcup_{c' \in \preset{e}} \Vignored(c') & \mbox{if \(e \notin E'\)} \\
\bigcap_{C^*\in\altconfig{\minconf e}}
  \uselessmg
      {\marking{C^*}}
      {\bigcup_{c' \in \preset{e}} \Vignored(c')} & \mbox{if \(e \in E'\),}
\end{array}\right.
\end{array}
\]
where $E'$ is the set of events triggering an explicit reduction
(Section~\ref{sec:gd-unf}).

This iterative construction necessarily terminates (Lemma~\ref{lem:termination},
proof in Appendix~\ref{proof:termination})
and converges to a unique finite prefix $\prefix\sgd$.
Regarding complexity, putting aside the call to model reduction,
whereas all the structures are finite,
$\altconfig{\config}$ can have an exponential numbers of configurations due to
multiple combinations of configurations sharing an intersection.

\begin{lemma}
  The procedure \Call{GD-Prefix}{$\petrinet$} terminates.
  \label{lem:termination}
\end{lemma}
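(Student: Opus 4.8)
The plan is to argue termination on two nested levels: the inner procedure \Call{Putative-GD-Prefix}{} must terminate for any fixed map $\Vignored$, and the outer \texttt{Repeat} loop of \Call{GD-Prefix}{} must perform only finitely many iterations. For the inner level, I would first observe that \Call{Putative-GD-Prefix}{$\petrinet,\Vignored$} constructs a prefix of the goal-driven unfolding $\unfolding\sgd$ restricted further by the extra ban coming from the current $\Vignored$, and that this construction is essentially McMillan's prefix construction guided by the adequate order $\Adq$. The key point is that the prefix it builds is finite: each event $e$ that is not a cut-off contributes a \emph{new} marking $\marking{\minconf e}$ among the finitely many markings of the safe net $\model$, so there are boundedly many non-cut-off events; cut-off events have no successors (extensions are forbidden through post-conditions of cut-offs); and since $\model$ is finite each event has finitely many immediate successors. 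Hence only finitely many extensions are ever added and \Call{Putative-GD-Prefix}{} halts. (Here I rely on the fact, recalled after Definition~\ref{adequate}, that on unfoldings of safe nets adequacy already entails well-foundedness, so the ``$\Adq$-minimal extension'' is always well defined and the cut-off argument is sound.)

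For the outer level, the argument is monotonicity of $\Vignored$. I would show that each call to \Call{Post-$\Vignored$}{$\Vignored,\prefix$} returns a map $\Vignored'$ that is pointwise \emph{smaller or equal} to $\Vignored$: line~\ref{line:post-useless} intersects each $\Vignored(c)$ with $\ignored{c,\Vignored,\prefix}$, and line~\ref{line:post-cutoffs} only further intersects, so $\Vignored'(c)\subseteq\Vignored(c)$ for every condition $c$ in the domain. Moreover, a smaller $\Vignored$ forbids \emph{fewer} extensions in line~\ref{line:extension}, so across iterations the set of conditions appearing in the prefix (hence the domain of $\Vignored$) can only grow or stay the same, while on the already-present conditions the image sets only shrink. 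The domain is bounded because, as just argued, each putative prefix is finite and is in fact a sub-object of the (finite) object that would be produced under the \emph{empty} ban; more carefully, one bounds the domain uniformly by the conditions of the prefix produced with $\Vignored$ forced to $\emptyset$ everywhere, which is a fixed finite set independent of the iteration. Within this fixed finite universe of conditions, each coordinate of $\Vignored$ is a subset of the finite transition set $T$ and is non-increasing, and the \texttt{Repeat} loop stops exactly when $\Vignored'=\Vignored$, i.e.\ at a fixpoint; since a pointwise non-increasing sequence of functions from a finite set to the finite lattice $2^T$ stabilises after finitely many strict decreases, the outer loop terminates.

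The main obstacle I anticipate is the interplay between the two levels: a priori, decreasing $\Vignored$ could enlarge the prefix, which in turn enlarges the domain of $\Vignored$, which could in principle allow new bans and break monotonicity. The crux of the proof is therefore to pin down a \emph{fixed} finite carrier — the set of conditions (equivalently, the candidate events $\tup{C',t}$ with $\minconf{}$ of bounded size realising a fresh marking) of the unfolding prefix obtained ignoring nothing — and to verify that every putative prefix of every iteration embeds into it, so that all the $\Vignored$ maps live over one common finite domain. Once that is established, the decreasing-chain argument over $2^T$ coordinatewise closes the proof. A secondary, more bookkeeping-level subtlety is checking that new entries added to $\Vignored$ inside \Call{Putative-GD-Prefix}{} (line~\ref{line:extend-domain}) for freshly created conditions are consistent with this carrier and do not reset previously shrunk values; this follows because those entries are set to $\ignored{c,\Vignored,\prefix}$ on conditions $c$ not previously in the domain, so they only extend the domain and never increase an existing coordinate.
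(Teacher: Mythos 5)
Your two-level decomposition (termination of \Call{Putative-GD-Prefix}{} for a fixed $\Vignored$, plus monotone stabilisation of $\Vignored$ across the outer loop) is exactly the structure of the paper's proof, and both your inner argument and the observation that \Call{Post-$\Vignored$}{} only shrinks values while the domain of $\Vignored$ only grows are correct. The gap is in the step you yourself single out as the crux: the choice of the fixed finite carrier. You propose to bound the domain of $\Vignored$ by the conditions of the prefix computed with $\Vignored$ forced to $\emptyset$ everywhere, claiming every putative prefix embeds into it. This is false: banning a transition can remove precisely the $\Adq$-minimal event $e'$ that would have made some later event $e$ with $\marking{\minconf{e}}=\marking{\minconf{e'}}$ a cut-off; in that iteration $e$ is then \emph{not} declared cut-off and its successors are explored, and those successors (whose causal pasts contain $e$) do not occur in the unbanned prefix, where $e$ is a cut-off and therefore has no successors. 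For the same reason, your auxiliary claim that the set of conditions appearing in the putative prefix can only grow across iterations is also wrong: unbanning a transition can create new cut-offs and prune branches present in the previous putative prefix. What grows monotonically is only the recorded domain of $\Vignored$, because entries, once created, are never deleted.

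The correct carrier --- the one the paper uses --- is independent of any particular run: in \emph{any} putative prefix, every strict causal predecessor of an event is a non-cut-off (extensions may not consume post-conditions of cut-off events), and non-cut-off events have pairwise distinct $\marking{\minconf{\cdot}}$, so every event $e$ of every putative prefix satisfies $|\preconf{e}| \leq R$, where $R$ is the number of reachable markings of $\model$. The set of events of the full unfolding $\unfolding$ whose causal past has size at most $R+1$ is finite by finite branching, and every putative prefix of every iteration lives inside it. With that carrier substituted for yours, your coordinatewise decreasing-chain argument over $2^T$ closes the proof as intended.
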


Notice that
$\prefix\sgd$ may contain events that are not in
\(E_\mathrm{gd}\).
Hence,
goal-driven prefix is a prefix of \(\unfolding\), but not necessarily a prefix
of \(\unfolding_\mathrm{gd}\). This is the case of the event labeled \(b'\)
after \(e'\), as we discussed above for the example in Fig.~\ref{fig:pb_cut-off}.

Theorem~\ref{th:prefix} (proof in Appendix~\ref{proof:prefix})
states completeness of $\prefix\sgd$ w.r.t.\ minimal configurations.
Thus, the goal-driven prefix preserves the reachability of the goal.
One can finally remark that, by construction, $\prefix\sgd$ contains at most one non-cutoff event
per reachable marking, assuming the adequate order $\Adq$ is total.%

\begin{theorem}\label{th:prefix}
  For every configuration \(\config\) of \(\unfolding_\mathrm{gd}\)
  and for every single-event extension \(\{f\}\) of \(\config\) such that
  \(\config \cup \{f\}\) is a prefix of a minimal configuration to the goal,
  there exists a configuration \(\config'\) 
  in the goal-driven
  prefix and a single-event extension \(\{f'\}\) of \(\config'\) with
  \(\marking{\config} = \marking{\config'}\) and \(h(f) = h(f')\).
\end{theorem}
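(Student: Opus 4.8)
My plan is to prove the statement by well-founded induction on the configuration \(\config\cup\{f\}\) ordered by the adequate order \(\Adq\), which is well founded on the configurations of the unfolding of a safe net (remark after Definition~\ref{adequate}). I would first record two auxiliary facts. \textbf{(a)} A suffix of a minimal firing sequence to \(\goal\) is itself a minimal firing sequence to \(\goal\): a feasible cycling permutation of the suffix, prefixed by the discarded part, would be a feasible cycling permutation of the whole sequence, contradicting minimality. \textbf{(b)} \emph{Soundness of \(\Vignored\)}: in the map \(\Vignored\) computed by \(\Call{GD-Prefix}{\petrinet}\), for every condition \(c\) of \(\prefix\sgd\), \(\Vignored(c)\) contains no transition that occurs in some minimal firing sequence to \(\goal\) starting from a marking genuinely reachable at \(c\) (in the model reduced by the transitions already accumulated in \(\Vignored\) along that branch). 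This follows from Definition~\ref{def:reduction} (reductions remove only transitions absent from \emph{all} minimal firing sequences), from (a), from the last lemma of Section~\ref{sec:reduction} (minimality is stable under iterating reductions), and from the fact that the \(\altconfig{\cdot}\)-intersection in \(\ignored{c,\Vignored,\prefix}\) and the intersections on lines~\ref{line:post-useless} and~\ref{line:post-cutoffs} only shrink \(\Vignored\); in particular \(\minconf{\preset c}\in\altconfig{\minconf{\preset c}}\) guarantees that the reduction, if called, is among others evaluated at the marking actually reached.

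Since the hypothesis ``\(\config\cup\{f\}\) is a prefix of a minimal configuration'' is destroyed by the shift used in the inductive step (a shift may reintroduce a transient cycle), I would carry through the weaker, shift-stable invariant \((\star)\): \(\config\cup\{f\}\) is a configuration-extension of \(\unfolding\) such that along every prefix-closed sub-extension the next event's label is the first transition of some minimal firing sequence to \(\goal\) from the marking reached so far (in the appropriate reduced model). By (a) and Theorem~\ref{th:godunf}, every \(\config\cup\{f\}\) as in the statement satisfies \((\star)\). \emph{Base case:} if \(\config\cup\{f\}\) contains no cut-off event of \(\prefix\sgd\), then by \((\star)\) and soundness (b) no event of \(\config\cup\{f\}\) has its label pruned at any of its pre-conditions, so — since \(\Call{Putative-GD-Prefix}{\petrinet,\Vignored}\) saturates the prefix and \(\Call{GD-Prefix}{\petrinet}\) iterates it until \(\Vignored\) stabilises (line~\ref{line:recompute}; termination by Lemma~\ref{lem:termination}) — every event of \(\config\cup\{f\}\), including \(f\), is an event of \(\prefix\sgd\), and \(\config'=\config\), \(f'=f\) work.

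\emph{Inductive step:} suppose \(\config\cup\{f\}\) contains a cut-off event \(e\) of \(\prefix\sgd\) (by (b), this is the only way a \((\star)\)-configuration can fail to sit entirely in \(\prefix\sgd\)). Let \(e'\) be the non-cutoff event with \(\marking{\minconf{e'}}=\marking{\minconf{e}}\) and \(\minconf{e'}\Adq\minconf{e}\) that caused \(e\) to be flagged, put \(D=(\config\cup\{f\})\setminus\minconf{e}\), let \(D'\) be the (conflict-free) extension of \(\minconf{e'}\) isomorphic to the extension \(D\) of \(\minconf{e}\), and set \(\config''\cup\{f''\}=\minconf{e'}\uplus D'\). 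Adequacy of \(\Adq\) gives \(\config''\cup\{f''\}\Adq\config\cup\{f\}\), and \(\marking{\config''}=\marking{\config}\), \(h(f'')=h(f)\) by construction. The heart of the proof is that \(\config''\cup\{f''\}\) again satisfies \((\star)\): the shift preserves the markings reached along prefixes (a prefix \(\minconf{e'}\cup X'\) of \(\config''\cup\{f''\}\) reaches \(\marking{\minconf{e}\cup X}\), a marking reached by the corresponding prefix of the original minimal configuration), so every label stays ``first transition of a minimal sequence'' from its marking; and for an event \(g'\in D'\) that consumes, besides a post-condition of \(e'\), conditions produced by events concurrent to \(\minconf{e'}\), the reduction that could prune \(h(g')\) at such a post-condition \(c'\) of \(e'\) is — by the definitions of \(\ignored{c',\Vignored,\prefix}\) and \(\altconfig{\cdot}\) and by line~\ref{line:post-cutoffs} — evaluated at the marking of a configuration \(\minconf{e'}\cup\config^{\circ}\in\altconfig{\minconf{e'}}\) equal to the marking reached just before \(g'\), so \(h(g')=h(g)\), being useful there, is not pruned. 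Applying the induction hypothesis to \(\config''\cup\{f''\}\) yields the required \(\config'\) and \(f'\).

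The step I expect to be the main obstacle is precisely the preservation of \((\star)\) under the shift, i.e.\ showing that the over-approximation \(\Vignored\) produced by the fixpoint is coarse enough on shifted branches. One must verify that \(\altconfig{\minconf{e'}}\) indeed contains, for every event reachable after \(e'\) in \(D'\), a configuration whose marking matches the one the corresponding unshifted configuration reaches, and that the relaxations on lines~\ref{line:post-useless} and~\ref{line:post-cutoffs} (in particular \(\Vignored'(c')\gets\Vignored'(c')\cap\Vignored'(c)\)) propagate all the way down \(D'\), even through chains of nested cut-offs and through events consuming conditions from several concurrent sources. This calls for an internal induction on the causal depth of \(D'\) and a careful matching of the (possibly exponentially many, as noted) alternating configurations with the chains of cut-off shifts, together with a check that the reasoning is consistent with the outer \(\Vignored\)-fixpoint, so that the final \(\Vignored\) is at once sound (by (b)) and permissive enough on shifted branches.
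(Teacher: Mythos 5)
Your overall strategy coincides with the paper's: shift a configuration containing a cut-off event \(e\) onto the isomorphic extension of the \(\Adq\)-smaller \(e'\), terminate by well-foundedness of \(\Adq\), and argue that the shifted events are not pruned thanks to \(\altconfig{\cdot}\) and the \(\Vignored\)-propagation at cut-offs. The concrete problem is your invariant \((\star)\): it is \emph{not} preserved by the shift, and the paper's own running example (Fig.~\ref{fig:pb_cut-off}) refutes it. Shifting the minimal configuration \(\Config(a'b'cb)\) at the cut-off \(e\) (labelled \(a'\)) onto \(e'\) (with \(\minconf{e'}=\Config(ab)\)) yields \(\Config(abb'cb)\), whose only linearization is \(a\,b\,b'\,c\,b\); at the prefix \(\Config(a)\), which reaches \(\{p_1,p_2\}\), the next event is labelled \(b\), and no minimal firing sequence from \(\{p_1,p_2\}\) to \(\{p_4,p_3\}\) begins with \(b\) (any continuation of \(b\) must fire \(b'\) and revisit \(\{p_1,p_2\}\)). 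So \((\star)\) fails on a strict prefix of \(\minconf{e'}\); your preservation argument only inspects prefixes of the form \(\minconf{e'}\cup X'\) and silently skips these. Since both your inductive hypothesis and your base case (``by \((\star)\) and (b) no event has its label pruned at any of its pre-conditions'') quantify over \emph{all} prefix-closed sub-extensions, the induction does not close as written.

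The repair is to drop the minimality-style invariant on the shifted configuration and argue, as the paper does, directly about non-pruning event by event: the events of \(\minconf{e'}\) need no argument because \(e'\) is a non-cut-off event already admitted into the putative prefix, and each event \(d'\) of \(D'\) survives because the reduction calls that could prune \(h(d')\) are evaluated (via \(\altconfig{\cdot}\) and line~\ref{line:post-cutoffs}) also at markings such as \(\marking{\minconf{e}\cup\minconf{d}}\) taken from the \emph{original} branch, where \(h(d')\) does head a genuine minimal suffix. Your closing paragraph correctly isolates the remaining difficulty---that \(\altconfig{\cdot}\) and the \(\Vignored\)-fixpoint are permissive enough through nested cut-offs and through events with several concurrent predecessors---and there you are no less detailed than the paper's own proof, which likewise leaves this step to an informal ``apply this inductively''.
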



\paragraph{Example.}
Let us consider the Petri net of Fig.~\ref{fig:pb_cut-off}(left) with the goal $\{p_4,p_3\}$.

The goal-driven unfolding can lead to the branching process of
Fig.~\ref{fig:pb_cut-off}(right) where
the dashed transition $b'$ has been removed. Indeed, after transition $a$, transition $b'$ is
declared useless as it is not part of any minimal configuration extending $\Config(a)$.
Therefore 3 maximal configurations are remaining in the goal-driven unfolding:
the two minimal configurations $\Config(acb)$ and $\Config(a'b'cb)$, and the configuration
$\Config(ab)$ which does not reach the goal.

The goal-driven prefix can lead to the branching process of Fig.~\ref{fig:pb_cut-off}(right) where
the event $e$ is cut-off (because of $e'$), and therefore its future events are ignored,
and where one of the two remaining events firing transition $c$ is declared cut-off (because of the other
one).
Although the $b'$ transition can be declared useless after $\Config(a)$ (and hence $\Config(ab)$),
the cut-off of $e$ will remove $b'$ from the set of ignored transitions of the conditions matching
with $p_1$ and $p_3$ on the cut of $\Config(ab)$.
Therefore, the events and conditions in the left gray area will be added to the prefix, from which
all the minimal configurations can be identified.


\section{Experiments}
\label{sec:experiments}

%

In this section, we compare the size of the complete prefix with the goal-driven prefix
on different Petri net models of biological signalling and gene regulatory networks.
In general, such networks gather dozens to thousands nodes having sparse
interactions (each node is directly influenced by a few other nodes), which
call for concurrency-aware approaches to cope with the state space explosion.
We took the networks from systems biology literature, specified as Boolean or automata networks:
each node is modelled by an automaton, where states model its
activity level, most often being binary (active or inactive).
The Petri nets are encodings of these automata networks which ensure bisimilarity \cite{CHJPS14-CMSB}.%

\paragraph*{Implementation.}
In practice, 
instead of computing putative prefixes from scratch
as it is described in Algorithm~\ref{alg:gdprefix},
our implementation for the goal-driven prefix\footnote{Code and models available at 
\url{http://loicpauleve.name/godunf.tbz2}}
iteratively corrects the putative prefix by propagating transitions missed in
the previous iteration.
At this stage, it does not use any particular optimization \cite{BBCKRS12},
our primary objective being to compare the size of the resulting prefixes.
In order to obtain a proper comparison \cite{KKV03}, our implementation uses the same
arbitrarily-fixed ordering for the complete and goal-driven prefixes extensions.

The computation of $\uselessmg M I$ relies on the goal-oriented reduction
of asynchronous automata networks introduced in \cite{gored}.
This method is based on a static analysis of causal dependencies of transitions and an
abstract interpretation of traces which allow to collect all the transitions
involved in the minimal configurations to the goal:
non-collected transitions can then be ignored.
The complexity of the reduction is polynomial with the number of automata
and transitions, and exponential with the number of states in individual automata
(i.e., number of qualitative states of nodes).
As shown in \cite{gored}, the method can lead to drastic model reductions
and can be executed in a few hundredths of a second on networks with several hundreds of nodes.

We applied the goal-driven unfolding to 1-safe Petri net encodings of the
automata networks, where there is one place for each local state of each
automaton, and a one-to-one relationship between transitions.
The places corresponding to states of a same automaton are mutually
exclusive by construction.
Future work may consider goal-driven unfolding of products of transition systems
\cite{Esparza-CONCUR99}.

The goal-driven prefix we define in this paper supports calling the model
reduction procedure at discretion:
even if it has a low computational cost,
performing the model reduction after each event may turn out to be very time consuming.
Our prototype implements simple strategies to decide when the call to the model reduction should
be performed: after each event; only for the first $n$ events; and only for events up
to a given level in the unfolding.

\paragraph*{Benchmarks.}
Given a Petri net with an initial marking $M_0$ and a goal $g$,
we first compute the goal-oriented model reduction from initial marking
($\uselessmg {M_0}{\emptyset}$).
The resulting net is then given as input to the unfolding, either with the
complete finite prefix computation, or with the goal-driven.
Therefore, the difference in the size of the prefixes obtained is due only to
transition exclusions after at least one event.

Table~\ref{tab:experiments} summarizes the benchmarks between complete and goal-driven prefix
on different models of biological networks.
The size of a prefix is the number of its non-cutoff events.
\begin{table}[tb]
\centering
\begin{tabular}{|l|l|l|r|r|r|}
\hline
Model & Prefix & Strategy & Prefix size & Time & Nb reductions\\
\hline
RB/E2F &
	complete & N/A & 15,210 & 24s & N/A\\
$|P|=80 \quad |T|=54$ &
	goal-driven & always & 112 & 0.5s & 136\\
\hline
T-LGL &
	complete & N/A & $>$1,900,000$^*$ & OT$^*$ & N/A\\
$|P|=98\quad |T|=159$ &
	goal-driven & always & 17 & 0.3s & 17\\
\hline
VPC &
	complete & N/A & 44,500 & 176s & N/A\\
$|P|=135\quad |T|=216$ &
	goal-driven & always & 1,827 & 2h & 16,009\\
&	& first $1,000$ & 2,036 & 60s & 1,000\\
&   & level $\leq 2$ & 2,400 & 7s & 38\\
\hline
\end{tabular}
\caption{Benchmarks of the goal-driven w.r.t. complete prefix of 1-safe Petri nets.
For each model, the number of places $|P|$ and transitions $|T|$ is given.
The strategy decides when the model reduction should be performed; the number of
calls to the reduction procedure is indicated in the column ``Nb reductions''.
Computation times were obtained on an Intel\textregistered{} Core\texttrademark{} i7 3.4GHz CPU
with 16GB RAM.
N/A: Non Applicable;
$^*$: out-of-memory computation (with mole \cite{mole}, with the same ordering for extensions as our implementation), the indicated prefix size is only a lower bound.}
\label{tab:experiments}
\end{table}
``RB/E2F'' is a model of the cell cycle \cite{e2frb};
``T-LGL'' is a model of survival signaling in large granular lymphocyte leukemia \cite{tlcg};
and ``VPC'' is a model for the specification of vulval precursor cells and cell fusion
control in Caenorhabditis elegans \cite{WM13-FG}.
All those models have very different network topology and dynamical features.
For each model, the initial marking and goal correspond to biological states of interest
(checkpoints or differentiated states).

On these models, the goal-driven prefix shows a significant size reduction, while
containing all the minimal configurations.
The number of reductions can be larger than the size of the prefix as it accounts for the
intermediate putative prefixes (as explained in Section~\ref{sec:gd-prefix}).
For the ``VPC'' model, we applied several strategies for deciding when the model reduction should be
called.
In this case, the systematic model reduction led to some re-ordering of the extensions and cut-offs
declaration, which required numerous additional calls to the model reduction procedure.
This motivates the design of heuristics to estimate when a model reduction should be performed.
For the ``T-LGL'' model, it was impossible to compute the complete finite prefix, whereas the
goal-driven cuts most of the configurations and produces a very concise prefix.
This behaviour can be explained by large transient cycles prior to the goal
reachability, which are avoided by the use of model reduction during the prefix computation.


\section{Conclusion}
\label{sec:conclusion}
We introduced the goal-driven unfolding of safe Petri nets for identifying efficiently \emph{all} the minimal
configurations that lead to a given goal.
The goal can be a marking of the net, or any partially specified marking, and notably a single
marked place.
The goal-driven unfolding relies on an external reduction method which identifies transitions that
are not part of minimal configuration for the goal reachability.
Such useless transitions are then skipped by the unfolding.
The computation of a goal-driven prefix requires a particular treatment of
cut-offs to ensure that all the markings reachable in the goal-driven unfolding are
preserved.

We applied our approach to different models of biological systems which show
a significant reduction of the prefix when driven by the goal.
In our framework, the reduction procedure can be applied at discretion, and many possible heuristics
could be embedded to decide when the reduction is timely, which impacts both the execution time and
the size of the prefix.
The resulting goal-driven prefix contains fewer events prefix than reachable markings, due to the
total adequate order, as well as for classical finite complete prefix.

Future work will explore the combination with the semi-adequate ordering of configurations
of directed unfolding \cite{BonetHHT08} as it may reduce the need for propagating transitions
allowed by a cut-off event.
Although our approach considers the model reduction procedure as a blackbox,
on-going work is currently generalizing the one used in the experimentations to any safe Petri net.
Finally, we are considering implementing the goal-driven unfolding within Mole~\cite{mole}.

\bibliographystyle{abbrv}
\bibliography{godunf}

\newpage
\appendix
\section{Proof of Theorem~\ref{th:godunf}}\label{proof:godunf}

Let \(E\) be a minimal configuration of \(\model\) to the goal. We want to prove
that \(E \subseteq E_{\mathrm{gd}}\). Given the definition of
\(E_{\mathrm{gd}}\) and given that \(E\) is causally closed, this is equivalent
to proving that \(E \cap E_\mi{Ignored} = \emptyset\}\): trivially, if some \(e
\in E\) is also in \(E_\mi{Ignored}\), then \(\minconf{e} \cap E_\mi{Ignored}\)
contains at least \(e\), so it is nonempty, and by definition \(e \notin
E_{\mathrm{gd}}\); conversely, if some \(e \in E\) is not in
\(E_{\mathrm{gd}}\), this is because \(\minconf{e} \cap E_\mi{Ignored} \neq
\emptyset\), and since \(E\) is causally closed, \(\minconf{e} \subseteq E\),
which implies that \(E \cap E_\mi{Ignored} \neq \emptyset\).

Then, it remains to show that \(E \cap E_\mi{Ignored} = \emptyset\). Let
\(e = \tup{C, t} \in E\);
we have to show that, for every \(e' \in \preconf{e}\), \(t \notin
\ignored{e'}\). If \(e' \notin E'\), then, by definition, \(\ignored{e'} \eqdef
\bigcup_{e'' \in \preconf{e'}} \ignored{e''}\), which means that \(e'\) has a
causal predecessor \(e''\) which also satisfies \(t \notin \ignored{e''}\).

Select now an \(e'\) which is minimal w.r.t.\ causality. This eliminates the
previous case, so we have \(e' \in E'\) and
\(t \in \bigcup_{e'' \in \preconf{e'}} \ignored{e''}\) and
\(t \notin \useless\Big(\model, \goal, \marking{\minconf{e'}}, \bigcup_{e'' \in
    \preconf{e'}} \ignored{e''}\Big)\).
Assuming that \(\useless\) is a reduction procedure satisfying
Definition~\ref{def:reduction}, this implies that no minimal firing sequence
from \(\marking{\minconf{e'}}\) to the goal uses \(t\), which contradicts the
fact that \(E\) is a minimal configuration to the goal: Indeed,
since \(e' \in \preconf{e}\), there exists a
linearization\footnote{A linearization of \(E\) is a total ordering of \(e_1,
  \dots, e_{|E|}\) of the events in \(E\) such that for \(e_i < e_j \implies i <
  j\).}
\(e_1, \dots, e_{|E|}\) of \(E\) in which the events in
\(\minconf{e'}\) occur before the others, i.e.\ \(\{e_1, \dots,
e_{|\minconf{e'}| - 1}\} = \preconf{e'}\), \(e_{|\minconf{e'}|} = e'\) and
\(\{e_{|\minconf{e'}| + 1}, \dots, e_{|E|}\} = E \setminus \minconf{e'}\);
then \(t_{|\minconf{e'}| + 1} \dots t_{|E|}\) (with \(t_i \eqdef h(e_i)\) the
transition corresponding to \(e_i\))
is a firing sequence from \(\marking{\minconf e'}\) to the goal and it uses
\(t\). If it is not minimal, then because it has a feasible cycling
permutation \(\sigma\), then \(t_1, \dots, t_{|\minconf{e'}|} \cdot \sigma\)
is a feasible cycling permutation from \(M_0\) to the goal, which
contradicts the fact that \(E\) is a minimal configuration to the goal.
\qed

\section{Proof of Lemma~\ref{lem:termination}}\label{proof:termination}
Because the set of markings is finite and because
$\altconfig{\config}$ is also finite (computed on a finite prefix),
procedure \Call{Putative-GD-Prefix}{$\petrinet,\Vignored$} always terminates;
moreover all the iterations in procedure
\Call{Post-$\Vignored$}{$\Vignored,\prefix$} are over finite sets.
Finally, we prove that procedure \Call{GD-Prefix}{$\petrinet$}
terminates, i.e., after a finite number of iterations,
\Call{Post-$\Vignored$}{$\Vignored,\prefix$} $=\Vignored$.
First, by construction,
$\forall c\in\Vignored$, $c\in\Vignored'$ and $\Vignored'(c) \subseteq \Vignored(c)$,
with $\Vignored'=$ \Call{Post-$\Vignored$}{$\Vignored,\prefix$}.
Then, remark that, due to the cut-off treatment,
any event of any putative prefix has a bounded number of event ancestors (causal past):
the number of reachable markings.
Finally, because the branching up to a given depth is finite, only a finite number of events
can be considered in any iteration of the putative prefix;
hence the number of events registered in $\Vignored$ is finite.
Therefore, due to the monotonicity of $\Vignored$ modifications,
the iterative procedure necessarily converges towards a unique finite prefix in a finite number of
steps.
\qed

\section{Proof of Theorem~\ref{th:prefix}}\label{proof:prefix}



  We first show that for every configuration \(\config\) that can be extended to
  a minimal configuration to the goal, 
  there exists a configuration in the goal-driven prefix which contains no
  cut-off event and reaches \(\marking{\config}\).
  The principle is the one used for completeness of classical finite prefixes
  defined using adequate orders: if \(\config\) contains no cut-off event, it is
  in the goal-driven prefix, since the construction of the \(\mathit{Useless}\)
  is more permissive in the goal-driven prefix (with the use of \(\altconfig\))
  than in the goal-driven unfolding. Now, if \(\config\) contains a cut-off
  event \(e\) (w.r.t.\ an event \(e'\) such that \(\minconf{e'} \Adq
  \minconf{e}\) and \(\marking{\minconf{e'}} = \marking{\minconf{e}}\)), then
  \(\config\) can be decomposed as \(\minconf{e} \uplus D\) and \(e'\) has an
  extension \(D'\) isomorphic to \(D\). Then \(\minconf{e'} \uplus D'\) is
  smaller than \(\config\) w.r.t.\ \(\Adq\) and reaches the same marking.
  This operation can be iterated if needed; it terminates because \(\Adq\) is well
  founded, and gives a configuration \(\config'\) without cut-offs which reaches
  the same marking as \(\config\). If \(\config\) can be extended with an event
  \(f\), then so can \(\config'\) with an event \(f'\) corresponding to the same
  transition \(h(f') = h(f)\). The event \(f'\) is in the prefix but may be a
  cut-off.

  It remains to make sure that the transitions of \(\config\) (plus \(h(f)\))
  are not considered useless. For this, focus on \(\minconf{e} \uplus D\) mapped
  to \(\minconf{e'} \uplus D'\). For every event \(d \in D\), let \(d'\) be the
  corresponding event in \(D'\). We have \(\minconf{d} \cup \minconf{e} \in
  \altconfig{\minconf{d'}}\) because the causal past of \(d'\) uses at least one
  condition from the cut of \(\minconf{e'}\). This ensures that the transitions
  fired in \(D\) after \(\minconf{d} \cup \minconf{e}\) are taken into account
  in the computation of the transitions allowed after \(d'\) (if \(d'\) is
  itself in the prefix, otherwise apply this inductively), ensuring that in
  the end \(\config' \cup \{f'\}\) is in the goal-driven prefix.
  \qed

\end{document}